\newcommand{\R}{\mathbb{R}}
\newcommand{\eps}{\varepsilon}
\newcommand{\norm}[1]{\mbox{$\parallel{#1}\parallel$}}
\newcommand{\floor}[1]{\lfloor #1 \rfloor}
\newcommand{\ceil}[1]{\left\lceil #1 \right\rceil}
\newcommand{\braket}[2]{\langle #1|#2 \rangle}
\newcommand{\ketbra}[2]{| #1\rangle \langle #2|}
\newcommand{\ket}[1]{| #1 \rangle}
\newcommand{\bra}[1]{\langle #1 |}
\def\01{\{0,1\}}
\newcommand{\QE}{\mathrm{QE}}
\newcommand{\QCE}{\mathrm{QCE}}
\newcommand{\RE}{\mathrm{RE}}
\newcommand{\Tr}{\mathrm{Tr}}
\newcommand{\rk}{\mathrm{rk}}
\newcommand{\polylog}{\mathrm{polylog}}
\newcommand{\psdrk}{\mathrm{rk_{psd}}}
\newcommand{\diag}{\mathrm{diag}}
\newcommand{\conv}{\mathrm{conv}}
\newcommand{\E}{\mathbb{E}}
\newcommand{\ldeg}{\mathrm{ldeg}}
\newcommand{\ignore}[1]{}
\newtheorem{theorem}{Theorem}
\newtheorem{lemma}[theorem]{Lemma}
\newtheorem{definition}[theorem]{Definition}
\newcommand{\thmref}[1]{\hyperref[#1]{{Theorem~\ref*{#1}}}}
\newcommand{\lemref}[1]{\hyperref[#1]{{Lemma~\ref*{#1}}}}
\newcommand{\corref}[1]{\hyperref[#1]{{Corollary~\ref*{#1}}}}
\newcommand{\eqnref}[1]{\hyperref[#1]{{Equation~(\ref*{#1})}}}
\newcommand{\claimref}[1]{\hyperref[#1]{{Claim~\ref*{#1}}}}
\newcommand{\remarkref}[1]{\hyperref[#1]{{Remark~\ref*{#1}}}}
\newcommand{\propref}[1]{\hyperref[#1]{{Proposition~\ref*{#1}}}}
\newcommand{\factref}[1]{\hyperref[#1]{{Fact~\ref*{#1}}}}
\newcommand{\defref}[1]{\hyperref[#1]{{Definition~\ref*{#1}}}}
\newcommand{\exampleref}[1]{\hyperref[#1]{{Example~\ref*{#1}}}}
\newcommand{\hypref}[1]{\hyperref[#1]{{Hypothesis~\ref*{#1}}}}
\newcommand{\secref}[1]{\hyperref[#1]{{Section~\ref*{#1}}}}
\newcommand{\chapref}[1]{\hyperref[#1]{{Chapter~\ref*{#1}}}}
\newcommand{\apref}[1]{\hyperref[#1]{{Appendix~\ref*{#1}}}}
\newenvironment{proof}[1][Proof: ]
{\noindent {\bf #1}}
{{\hfill $\Box$}\\
 \smallskip}
\begin{document}

\title{Query complexity in expectation}
\author{Jedrzej Kaniewski\thanks{Centre for Quantum Technologies, National University of Singapore and QuTech, Delft University of Technology, the Netherlands. \url{j.kaniewski@nus.edu.sg}.} 
\and 
Troy Lee\thanks{School of Mathematics and Physical Sciences, Nanyang Technological University and Centre for Quantum Technologies, Singapore. \url{troyjlee@gmail.com}.  This material is based on research supported in part by the Singapore National Research Foundation under NRF RF Award No. NRF-NRFF2013-13.}  
 \and 
Ronald de Wolf\thanks{CWI and University of Amsterdam, the Netherlands. \url{rdewolf@cwi.nl}. Partially supported by a Vidi grant from the Netherlands Organization for Scientific Research (NWO) which ended in 2013, by ERC Consolidator Grant QPROGRESS, and by the European Commission IST STREP project Quantum Algorithms (QALGO) 600700.}}
\date{}
\maketitle
\thispagestyle{empty}

\begin{abstract}
We study the query complexity of computing a function $f:\01^n\to\R_+$ \emph{in expectation}. This requires the algorithm on input $x$ to output a nonnegative random variable whose expectation equals~$f(x)$, using as few queries to the input~$x$ as possible. We exactly characterize both the randomized and the quantum query complexity by two polynomial degrees, the nonnegative literal degree and the sum-of-squares degree, respectively. We observe that the quantum complexity can be unboundedly smaller than the classical complexity for some functions, but can be at most polynomially smaller for functions with range $\01$.

These query complexities relate to (and are motivated by) the extension complexity of polytopes.  The \emph{linear} extension complexity of a polytope is characterized by the randomized \emph{communication} complexity of computing its slack matrix in expectation, and the \emph{semidefinite} (psd) extension complexity is characterized by the analogous quantum model.  Since query complexity can be used to upper bound communication complexity of related functions, we can derive some upper bounds on psd extension complexity by constructing efficient quantum query algorithms. As an example we give an exponentially-close entrywise  approximation of the slack matrix of the perfect matching polytope with psd-rank only $2^{n^{1/2+\eps}}$.  Finally, we show there is a precise sense in which randomized/quantum query complexity in expectation corresponds to the Sherali-Adams and Lasserre hierarchies, respectively.
\end{abstract}

\newpage
\setcounter{page}{1}

\section{Introduction}

\subsection{Computing functions in expectation}

We study the complexity of computing a function $f:\01^n \to \R_+$ \emph{in expectation}.  In this setting, on input~$x$ we want our algorithm to output a nonnegative
real number whose expectation (over the algorithm's internal randomness) exactly equals~$f(x)$.  Getting the expectation right is an easier task than computing the function value $f(x)$ itself, and suffices in some applications. For example, suppose we want to approximate the value $F(x)=\sum_{i=1}^m f_i(x)$ that depends on $x\in\01^n$. Then we can just compute each $f_i(x)$ \emph{in expectation} and output the sum of the results. By linearity of expectation, the output will have expectation~$F(x)$, and it will be tightly concentrated around its expectation if the random variables are not too wild (so the Central Limit Theorem applies). It is not necessary to compute or even approximate any of the values $f_i(x)$ themselves for this. This illustrates that computing functions in expectation is an interesting model in its own right. Additionally, it is motivated by connections with the \emph{extension complexity} of polytopes that are used in combinatorial optimization (roughly: the minimal size of linear or semidefinite programs for optimizing over such a polytope), as we describe below in Section~\ref{ssecextensioncomplexity}. 

The complexity of computing~$f$ can be measured in different ways, and here we will focus on \emph{query} complexity. We measure the complexity of computing a function in expectation by the (worst-case) number of queries to the input $x\in\01^n$ that the best algorithm uses. We study both \emph{randomized} and \emph{quantum} versions of this model and show that both of these query complexities can be exactly characterized by natural notions of polynomial degree. In Section~\ref{sec:re} we show that the randomized query complexity of computing $f$ in expectation equals the ``nonnegative literal degree'' of~$f$, which is the minimal~$d$ such that $f$ can be written as a nonnegative linear combination of products of up to $d$ variables or negations of variables.  In Section~\ref{sec:qe} we show that the quantum complexity equals the ``sum-of-squares degree'', which is the minimal~$d$ such that there exist polynomials $p_i$ of degree at most $d$ satisfying $f(x)=\sum_i p_i(x)^2$ for all $x\in\01^n$. 

In Section~\ref{secqcgap} we observe that quantum and classical query complexities (equivalently: the above two types of polynomial degree) can be arbitrarily far apart. For example, the function $f(x)=(\sum_{i=1}^n x_i -1)^2$ is the square of a degree-1 polynomial and hence can be computed in expectation with only 1 quantum query, while randomized algorithms need $n$ queries to get this expectation right. In contrast, we also show that for functions with range $\01$, the quantum-classical gap cannot be very large: at most cubic. 

Lower bounds on the quantum query complexity can be obtained from lower bounding the sum-of-squares degree of the function at hand, which is often non-trivial.  In Section~\ref{ssecsoslowerbound}, using techniques from approximation theory, we prove that the function $f(x)=(\sum_{i=1}^n x_i -1)(\sum_{i=1}^n x_i -2)$ has sum-of-squares degree $\Omega(\sqrt{n})$.  Hence quantum algorithms require $\Omega(\sqrt{n})$ queries to compute this function in expectation.

\subsection{Motivation: linear and semidefinite extension complexity}\label{ssecextensioncomplexity}

Our main motivation for studying query complexity in expectation comes from combinatorial optimization, in particular from linear and semidefinite programs. Many optimization problems can be formulated as maximizing or minimizing a linear function over a polytope. For example, in the Traveling Salesman Problem on $n$-vertex undirected graphs, one wants to minimize a linear function (the length of the tour) over the polytope $P\subseteq\R^{n\choose 2}$ that is the convex hull of all Hamiltonian cycles in the complete $n$-vertex graph~$K_n$.  If this polytope could somehow be represented as the feasible region of a small linear or semidefinite program, then we could efficiently solve the problem using the ellipsoid or interior-point methods.

Informally, the {\bf linear extension complexity} of a polytope~$P\subseteq \R^d$ is the minimum number of linear inequalities (over the $d$ variables of $P$ as well as possibly auxiliary variables) whose feasible region projects down to~$P$. If the linear extension complexity is small, there is a small linear program to optimize over~$P$. 

Motivated by erroneous claims~\cite{swart:tsp} that the TSP polytope had polynomial linear extension complexity (implying P $=$ NP), Yannakakis~\cite{Yannakakis91} showed that ``symmetric'' linear extensions of the Traveling Salesman Polytope need $2^{\Omega(n)}$ linear inequalities. He showed the same for the perfect matching polytope (which is spanned by all perfect matchings in $K_n$), despite the fact that finding a maximum matching can be done efficiently! For a long time, generalizing these lower bounds to arbitrary (possibly non-symmetric) linear extensions was an open question.  However, recently Fiorini et al.~\cite{FioriniMassarPokuttaTiwaryDewolf2012} proved a $2^{\Omega(n^{1/2})}$ lower bound on the linear extension complexity of the TSP polytope.  Subsequently Rothvo\ss~\cite{rothvoss:matching} proved a $2^{\Omega(n)}$ lower bound for the perfect matching polytope, which via a reduction implies the same bound for TSP. Chan et al.~\cite{clrs:csplp} obtained lower bounds on linear extension complexity for constraint satisfaction problems via a different route: roughly put, they showed that arbitrary linear extensions are not much more powerful than the specific linear extensions produced by the ``Sherali-Adams Hierarchy''; hence they could obtain lower bounds on linear extension complexity from known bounds on the Sherali-Adams hierarchy.

The {\bf positive semidefinite (psd) extension complexity} of polytope~$P$ is similar, but replaces the linear programs by potentially more powerful semidefinite programs.  The complexity is now the minimal dimension of a semidefinite program whose feasible region projects down to~$P$.  In contrast to the case of linear extension complexity, very few lower bounds on psd extension complexity are known. Until recently, there were only a few lower bounds for ``symmetric'' psd extensions~\cite{lrst:symmsdp,fsp:equivariant}.  However, in a \emph{very} recent breakthrough, Lee et al.~\cite{lrs:cspsdp} generalized the approach of~\cite{clrs:csplp} to show that arbitrary psd extensions are not much more powerful than the specific psd extensions produced by the ``Lasserre Hierarchy''. In particular they showed that the TSP polytope has psd extension complexity $2^{\Omega(n^{1/13})}$. 

Surprisingly, there is a very close connection between these extension complexities and the model of computing functions in expectation, albeit for the \emph{communication complexity} of computing a 2-input function.  More precisely, suppose Alice receives input~$x$, Bob receives input~$y$, and they want to compute some function $g(x,y)$ (which may also be viewed as a matrix).  In the usual setting of communication complexity~\cite{KushilevitzNisan97}, one of the parties (let's say Bob) has to output this value~$g(x,y)$ exactly, either with probability~1 or with high probability.  However, we may also consider how much communication they need to compute $g(x,y)$ \emph{in expectation}, i.e., now Bob needs to output a nonnegative random variable whose expected value equals~$g(x,y)$. Faenza et al.~\cite{ffgt:extandcc} showed that the logarithm of the linear extension complexity of a polytope~$P$ equals the randomized communication complexity of computing (in expectation) a matrix associated with $P$, known as the \emph{slack matrix}.  Lifting this result to the quantum/psd case, Fiorini et al.~\cite{FioriniMassarPokuttaTiwaryDewolf2012} showed that the logarithm of the \emph{psd} extension complexity equals the one-way \emph{quantum} communication complexity of computing the slack matrix of $P$ in expectation; in this model Alice sends a single quantum message to Bob. These connections show that studying (linear and psd) extension complexity of a polytope~$P$ is \emph{equivalent} to studying (randomized and one-way quantum) communication complexity in expectation, of the slack matrix of~$P$.

How do our results on the \emph{query} complexity of computing a function in expectation impact this \emph{communication} complexity?  Many functions of interest in communication complexity are of the form $g(x,y)=f(x\wedge y)$ for some Boolean function $f:\01^n \to \01$, where the AND-connective is applied bitwise.  Functions of this form also arise as (submatrices of) slack matrices of interesting polytopes, for example the correlation polytope.  Quite generally across the usual models of worst-case complexity, be it deterministic, randomized, or quantum, upper bounds on the \emph{query complexity} of~$f$ imply upper bounds on the \emph{communication complexity} of~$g$. In Section~\ref{sec:comm-query} we show that this also holds for the randomized and quantum models of computing a function in expectation.  As this leads to multi-round communication protocols, we also show that the one-way quantum communication complexity of computing a function in expectation equals the two-way complexity.

In Section~\ref{ssec:matchingpolytope} we give an application of the connection between query algorithms and communication complexity (equivalently, \emph{psd rank}), by deriving an exponentially-close entrywise approximation of the slack matrix $S$ of the perfect matching polytope with psd rank $2^{n^{1/2+\eps}}$. This psd rank is surprisingly low in view of the fact that Rothvo\ss~\cite{rothvoss:matching} showed that the nonnegative rank of $S$ is $2^{\Omega(n)}$, and Braun and Pokutta~\cite{braun&pokutta:matchrelax} showed that any $\tilde{S}$ that is $O(1/n)$-close to~$S$ still needs nonnegative rank $2^{\Omega(n)}$. 


Communication protocols derived from query algorithms have a specific structure.  In spirit, this is somewhat 
similar to looking at linear/psd extensions derived from hierarchies of specific linear or semidefinite programs 
like the Sherali-Adams and Lasserre hierarchies.  We show that these two relaxations actually correspond in 
a precise sense: just as the linear and psd extension complexities are characterized by models of communication complexity in expectation, the Sherali-Adams and Lasserre hierarchies are characterized by randomized and quantum models of query complexity in expectation, respectively.  This connection, described in Section~\ref{sec:hierarchies}, follows from known characterizations of these hierarchies in terms of notions of polynomial degrees which exactly correspond to the polynomial degrees we consider here.

\section{Preliminaries}

\subsection{Polytopes and extension complexity}\label{sec:defs}

A polytope $P\subseteq\R^d$ has both an \emph{inner description} as the convex hull of a set~$V\subseteq\R^d$ of points, $P=\conv(V)$; and an \emph{outer description} as the intersection of halfspaces, $P=\{x\in\R^d : Ax \le b\}$.  A \emph{slack matrix} integrates information from these two descriptions:

\begin{definition}
Let $P=\conv(V)=\{x : Ax \le b\}$ be a polytope.  The slack matrix $M$ of $P$ has columns labeled by $v \in V$ and rows labeled by constraints $A_i x \le b_i$, with entries~$M(i,v)=b_i-A_iv$.  
\end{definition}

\begin{definition}
Let $M$ be a nonnegative matrix.  A nonnegative factorization of $M$ of size $d$ consists of two sets of $d$-dimensional nonnegative vectors $\{a_x\},\{b_y\}$ such that $M(x,y)=a_x^T b_y$ for all $x,y$.  The nonnegative rank of $M$, denoted $\rk_+(M)$, is the minimal size among all nonnegative factorizations of $M$.
Equivalently, it is the minimum number of nonnegative rank-one matrices whose sum is $M$.
\end{definition}

\begin{definition}
Let $M$ be a nonnegative matrix.  A psd factorization of $M$ of size $d$ consists of two sets of $d$-by-$d$ psd matrices $\{A_x\},\{B_y\}$ such that $M(x,y)=\Tr(A_x B_y)$ for all $x,y$.  The psd rank of $M$, denoted $\psdrk(M)$, is the minimal size among all psd factorizations of $M$.
\end{definition}
Note that a nonnegative factorization is a psd factorization where the matrices are diagonal.

The \emph{linear extension complexity} of a polytope~$P$ is the minimum number of facets of a (higher-dimensional) polytope which projects to~$P$.  The \emph{semidefinite (psd) extension complexity} of~$P$ is the minimum~$d$ such that an affine slice of the cone of $d$-by-$d$ positive semidefinite matrices projects to $P$.  These complexity measures can be captured in terms of the above notions of rank of a slack matrix:

\begin{theorem}[\cite{Yannakakis91,GouveiaParriloThomas12}]
\label{thm:yannakakis}
The linear extension complexity of a polytope $P$ is the nonnegative rank of a slack matrix of $P$.  The semidefinite (psd) extension complexity of~$P$ is the psd rank of a slack matrix of $P$.
\end{theorem}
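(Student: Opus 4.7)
The plan is to prove both statements in parallel, exploiting the structural analogy between nonnegative factorizations and psd factorizations (the former being the diagonal special case of the latter). Fix $P=\conv(V)=\{x : A_i x \le b_i, i\in[m]\}$ with slack matrix $M(i,v)=b_i-A_iv$. I would prove the two inequalities ``rank $\le$ extension complexity'' and ``extension complexity $\le$ rank'' separately, running the same argument for both the nonnegative and the psd settings.

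For the direction that a small factorization yields a small extension, I would exhibit the extension explicitly. Given $M(i,v)=a_i^T b_v$ of size $r$, take
$$Q = \{(x,\lambda)\in\R^d\times\R^r_{\ge 0} : A_i x + a_i^T\lambda = b_i \text{ for all } i\}.$$
For each vertex $v$ the point $(v,b_v)$ lies in $Q$ precisely because the factorization forces $A_i v + a_i^T b_v = b_i$, and any $(x,\lambda)\in Q$ satisfies $A_i x\le b_i$. Hence the coordinate projection maps $Q$ onto $P$, giving a linear extension with $r$ nonnegativity constraints. For the psd case I would replace the factors $a_i, b_v$ by psd matrices $E_i, F_v\in\cS^r_+$ with $M(i,v)=\Tr(E_i F_v)$, and build the extension as $\{(x,X) : X\in\cS^r_+,\ A_i x + \Tr(E_i X) = b_i\}$, checking that $(v, F_v)$ is a valid lift of each vertex.

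For the opposite direction, that a small extension yields a small factorization, I would invoke conic duality. Given $Q=\{y : C_j y\le d_j, j\in[r]\}$ with $\pi(Q)=P$ for an affine $\pi$, the function $y\mapsto b_i - A_i\pi(y)$ is nonnegative on $Q$, and Farkas' lemma produces nonnegative multipliers $(a_i(k))_{k=1}^r$ with $b_i - A_i\pi(y) = \sum_k a_i(k)(d_k - C_k y)$ as an identity on the affine hull of $Q$. Choosing a preimage $y_v\in Q$ of each vertex $v$ and setting $b_v(k)=d_k-C_k y_v\ge 0$ then yields $M(i,v)=a_i^T b_v$, a factorization of size $r$. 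The psd case follows the same template with Farkas replaced by its psd-cone analogue: a linear functional nonnegative on an affine slice of $\cS^r_+$ is expressible as $\Tr(W\,\cdot)$ for some $W\succeq 0$ plus equality-constraint multipliers.

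The main technical obstacle is the conic-duality step in the psd case. Unlike classical Farkas, psd-Farkas can fail without a strict-feasibility (Slater) hypothesis on the slice, and strong SDP duality can leave a gap. I would handle this in the Gouveia--Parrilo--Thomas style: restrict to the minimal face of $\cS^r_+$ meeting the feasible slice, which is itself a psd cone of lower dimension on which strict feasibility holds automatically, and apply duality there. A more conceptual route, matching their actual presentation, is to reformulate the theorem as a cone-factorization statement for the pair $(\text{cone}(P),\text{cone}(P^*))$ through $\cS^r_+$, which packages the duality neatly and sidesteps the face-by-face analysis.
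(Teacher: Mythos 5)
The paper does not prove this theorem; it is stated as a citation to Yannakakis \cite{Yannakakis91} (for the nonnegative/linear case) and Gouveia--Parrilo--Thomas \cite{GouveiaParriloThomas12} (for the psd case), so there is no ``paper's own proof'' to compare against. Judged on its own merits, your sketch follows the standard route of those references: the ``factorization $\Rightarrow$ extension'' direction by explicitly writing down the lift $Q$ from the factors, and the ``extension $\Rightarrow$ factorization'' direction by applying Farkas/conic duality to the nonnegativity of $b_i-A_i\pi(y)$ over $Q$ and evaluating the resulting multipliers at lifted vertices. Your handling of the psd Farkas step---facial reduction to the minimal face meeting the slice, or equivalently GPT's cone-factorization packaging---is also the correct fix for the possible failure of Slater's condition.

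Two small technical points that a full write-up would need to address. First, the extension $Q=\{(x,\lambda):A_ix+a_i^T\lambda=b_i,\ \lambda\ge 0\}$ is a priori only a polyhedron; one must argue it can be taken to be a polytope (or work with the polyhedral/cone version of extension complexity and then pass back), since any unbounded direction must project to zero in $\R^d$ by boundedness of $P$ and can be quotiented out without increasing the facet count. Second, the psd lift you write lives in $\R^d\times\cS^r_+$, whereas the definition in this paper asks for an affine slice of $\cS^r_+$ alone; the two notions of psd extension complexity agree up to a constant additive shift in $r$ (e.g.\ by embedding the $x$-coordinates into a bordered psd block), but this should be stated. Neither point is a gap in the underlying idea---they are exactly the kind of normalizations Yannakakis and GPT carry out carefully.
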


A polytope may have different slack matrices associated with it, depending on which inner and outer description are used. By \thmref{thm:yannakakis} these slack matrices all have the same nonnegative and psd rank.

\subsection{Candidate matrix for lower bounding the correlation polytope}\label{ssec:candidatematrix}

One of our targets is the correlation polytope: $\text{COR}_n=\{xx^T : x \in \{0,1\}^n\}$. Fiorini et al.~\cite{FioriniMassarPokuttaTiwaryDewolf2012} showed that lower bounds on the linear/semidefinite extension complexity of the correlation polytope imply lower bounds on several other polytopes of interest, including the Traveling Salesman Polytope.
The next lemma from \cite{padberg89} gives a family of matrices that occur as a submatrix of the slack matrix of the correlation polytope.
\begin{lemma}
\label{lem:quad_poly}
Let $p(z)=a+bz+cz^2$ be a single-variate degree-$2$ polynomial that is nonnegative on nonnegative integers.  The matrix $M(x,y)=p(|x \wedge y|)$ for $(x,y) \in \{0,1\}^n$ is a submatrix of a slack matrix for the correlation polytope $\text{COR}_n$.
\end{lemma}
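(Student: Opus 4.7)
The plan is to exhibit, for each $x \in \{0,1\}^n$, a valid linear inequality for $\mathrm{COR}_n$ whose slack at the vertex $yy^T$ equals $p(|x\wedge y|)$; the columns of $M$ will be indexed by the vertices $yy^T$ and the rows by these inequalities. Since any valid inequality can be adjoined to an outer description of a polytope without affecting the resulting nonnegative/psd ranks, this displays $M$ as a submatrix of a slack matrix.

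The key identity is that for $x,y \in \{0,1\}^n$ and $V = yy^T$, both $|x \wedge y|$ and $|x \wedge y|^2$ are \emph{linear} in~$V$. Concretely, $|x\wedge y| = \sum_i x_i y_i = \sum_i x_i y_i^2 = \mathrm{Tr}(\mathrm{diag}(x)\, V)$ (using $y_i^2 = y_i$), and $|x\wedge y|^2 = (x^Ty)^2 = x^T y y^T x = \mathrm{Tr}(xx^T\, V)$. Consequently, evaluating on $V = yy^T$ one has
\[
 p(|x\wedge y|) \;=\; a + b\,\mathrm{Tr}(\mathrm{diag}(x)\,V) + c\,\mathrm{Tr}(xx^T\,V),
\]
which is an affine function of the matrix variable $V$.

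For each $x\in\{0,1\}^n$, consider the inequality
\[
 b\,\mathrm{Tr}(\mathrm{diag}(x)\,V) + c\,\mathrm{Tr}(xx^T\,V) + a \;\ge\; 0
\]
viewed as a linear inequality in the entries of~$V$. On the vertex $V = yy^T$ its left-hand side equals $p(|x\wedge y|)$. Since $|x\wedge y|$ is a nonnegative integer, the hypothesis on~$p$ ensures this quantity is nonnegative at every vertex of $\mathrm{COR}_n$, and thus, by convexity, the inequality is valid on all of $\mathrm{COR}_n$. Writing it in the standard form $A_x \cdot V \le \beta_x$ with $\beta_x = a$ and $A_x \cdot V = -b\,\mathrm{Tr}(\mathrm{diag}(x)\,V) - c\,\mathrm{Tr}(xx^T\,V)$, the slack $\beta_x - A_x\cdot V$ at the vertex $yy^T$ is exactly $p(|x\wedge y|) = M(x,y)$.

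Gathering these inequalities over all $x\in\{0,1\}^n$ and adjoining them to any outer description of $\mathrm{COR}_n$ yields a slack matrix whose restriction to the rows indexed by $x$ and the columns indexed by the vertices $yy^T$ is precisely~$M$. There is no real obstacle here beyond spotting the identity $|x\wedge y|^2 = \mathrm{Tr}(xx^T\, yy^T)$, which linearizes the quadratic term in $p$ when one lifts from $\{0,1\}^n$ into the space of the correlation polytope.
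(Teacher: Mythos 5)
Your proof is correct and follows essentially the same approach as the paper: you use the same identities $|x\wedge y| = \Tr(\diag(x)\,yy^T)$ and $|x\wedge y|^2 = \Tr(xx^T\,yy^T)$ to linearize $p$, and you construct the same family of valid inequalities indexed by $x$, whose slacks at vertices $yy^T$ give $M(x,y)$. The only difference is that you spell out the validity-by-convexity step a bit more explicitly than the paper does.
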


\begin{proof}
As $p$ is nonnegative on nonnegative integers, $-bz-cz^2 \le a$ is a valid inequality for integers $z \ge 0$.  Note that $\Tr(xx^Tyy^T)=|x\wedge y|^2$ and $\Tr(\diag(x)yy^T)=|x \wedge y|$ for all $x,y \in \{0,1\}^n$. Thus $\Tr((-b \cdot\diag(x)-c\cdot xx^T) yy^T) \le a$ is a valid inequality, whose slack is $p(|x \wedge y|)$. Note that the columns of $M$ are labeled by vertices of the correlation polytope $yy^T$ for $y \in \{0,1\}^n$ and likewise the constraints are labeled by $xx^T$ for $x \in \{0,1\}^n$.
\end{proof}

Later in this paper we will consider the matrix $M(x,y)=(|x \wedge y|-1)(|x \wedge y|-2)$ and its associated query problem $f(x)=(|x|-1)(|x|-2)$, where $|x|$ denotes the Hamming weight of the Boolean string $x$.

\subsection{Polynomials}
We will study two types of polynomials that are obviously nonnegative on the Boolean cube: nonnegative literal polynomials and sum-of-squares polynomials.

\begin{definition}[nonnegative literal degree]
A nonnegative literal polynomial is a nonnegative linear combination of products of variables and negations of variables, i.e., it can be written as
\[
p(x)=\sum_{S \subseteq [n]} \sum_{b \in \01^{|S|}} \alpha_{S,b} \prod_{i \in S} ((-1)^{b_i}x_i + b_i)
\]
where each $\alpha_{S,b} \ge 0$.  Its degree is $\max\{|S| : \alpha_{S,b}\neq 0\}$.
The nonnegative literal degree of $f:\01^n \rightarrow \R_+$, denoted $\ldeg_{+}(f)$, is the minimum degree of a nonnegative literal polynomial $p$ that equals $f$ on $\01^n$.
\end{definition}
Such~$p$ are also called \emph{nonnegative juntas}~\cite{clrs:csplp}. 

\begin{definition}[sum-of-squares degree]
Let $d$ be a natural number.
A sum-of-squares polynomial of degree~$d$ is a polynomial $p$ that can be written in the form
\[
p(x)=\sum_{i \in {\cal P}}p_i(x)^2,
\]
where $\cal P$ is a finite index set and the $p_i$ are polynomials of degree $\leq d$.
The sum-of-squares (sos) degree of $f: \01^n \rightarrow \R_+$, denoted $\deg_{sos}(f)$, is the minimum~$d$ for which such a~$p$ equals $f$ on $\01^n$. 
\end{definition}
Note that a sum-of-squares polynomial of degree~$d$ is actually a polynomial of degree $2d$; we allow this slight abuse of notation in order to give a clean characterization in Theorem~\ref{thm:polyn} below.

\subsection{The Sherali-Adams and Lasserre hierarchies}
\label{sec:hierarchies}
Consider the optimization problem
\begin{equation}
\label{eq:opt}\alpha(f) = \max_{x \in \01^n} f(x) \enspace 
\end{equation}
where $f$ is given by a multilinear polynomial.  Many important optimization problems can be cast in this framework, including NP-hard ones. For example finding the maximum cut in a graph $G=(V,E)$ with $n$ vertices corresponds to the 
quadratic function $f(x)=\sum_{(i,j) \in E} x_i (1-x_j)$.

If $c \ge \alpha(f)$, then the function $c-f$ is nonnegative on $\01^n$.  One way we can witness this is by expressing $c-f$ as a polynomial which is obviously nonnegative for all $x\in\01^n$.  The \emph{Sherali-Adams hierarchy}~\cite{SheraliAdams90} looks for a witness in the form of a nonnegative literal polynomial.  The sum-of-squares or \emph{Lasserre hierarchy} looks for a witness in the form of a sum-of-squares polynomial~\cite{Lasserre01, Parrilo00,Shor87}. 

If we can find a nonnegative literal polynomial $p$ of degree $d$ such that $c-f(x)=p(x)$, then this witnesses that the optimal value is upper bounded as $\alpha(f) \le c$.  Moreover, determining if the nonnegative literal polynomial degree of $c-f(x)$ is at most $d$ can be formulated as a linear program of size $n^{O(d)}$.  The value of the $d$-round Sherali-Adams relaxation for~\eqref{eq:opt} is the smallest value of $c$ such that $c-f(x)$ is a degree-$d$ nonnegative literal polynomial. Thus the smallest $d$ for which a Sherali-Adams relaxation certifies an \emph{optimal} upper bound is exactly the nonnegative literal degree $\ldeg_{+}(\alpha(f)-f)$ of the function $\alpha(f) - f$. 

Similarly, if we can find polynomials $p_i: \01^n \rightarrow \R$ of degree at most $d$, such that $c-f(x) = \sum_i p_i(x)^2$, then this witnesses that $\alpha(f) \le c$.  Moreover, searching for such polynomials $p_i$ can be expressed as a semidefinite program of size $n^{O(d)}$.  The smallest value of $c$ such that $c-f$ is degree-$d$ sum-of-squares is known to be equivalent to the relaxation of~\eqref{eq:opt} given by the $d^{th}$ level of the Lasserre hierarchy.  The level of the Lasserre hierarchy required to exactly capture~\eqref{eq:opt} is thus $\deg_{sos}(\alpha(f) - f)$.  

\section{Randomized query complexity in expectation}\label{sec:re}

In this section we study classical randomized query complexity in expectation, characterize it by the nonnegative literal degree, and relate it to the Sherali-Adams hierarchy.

\subsection{Definition}
\label{sec:randomized}
We define a randomized model of computing a function in expectation.  A \emph{randomized decision tree} is a probability distribution $\mu$ over deterministic decision trees.  We consider deterministic decision trees with leaves labeled by nonnegative real numbers.  A randomized decision tree computes a function $f: \01^n \rightarrow \R_+$ if for every $x \in \01^n$ the expected output of the tree on input $x$ is $f(x)$.  The \emph{cost} of such a tree is, as usual, the maximum cost, that is the length of a longest path from the root to a leaf, of a deterministic decision tree that has nonzero $\mu$-probability.

\begin{definition}
The randomized query complexity of computing $f$ in expectation,  denoted $\RE(f)$, is the minimum cost among all randomized decision trees that compute $f$ in expectation.
\end{definition}

\subsection{Characterization of $\RE(f)$ by polynomials}

We now show that $\RE(f)$ is characterized by the nonnegative literal degree.

\begin{theorem}\label{thRE=ldeg+}
Let $f: \01^n \rightarrow \R_+$.  Then $\RE(f)=\ldeg_{+}(f)$.
\end{theorem}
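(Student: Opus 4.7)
The plan is to prove the two inequalities $\RE(f)\le \ldeg_+(f)$ and $\ldeg_+(f)\le \RE(f)$ separately, each by a direct construction/readout that exploits the same fact: a product of $|S|$ literals $\prod_{i\in S}((-1)^{b_i}x_i+b_i)$ is precisely the indicator of the partial assignment $x|_S=b$, which is exactly what a depth-$|S|$ decision tree path detects.

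For the easier direction $\RE(f)\le \ldeg_+(f)$, start with an optimal nonnegative literal representation
\[
f(x)=\sum_{S,b}\alpha_{S,b}\prod_{i\in S}((-1)^{b_i}x_i+b_i),\quad \max|S|\le d,\ \alpha_{S,b}\ge 0.
\]
Let $\alpha=\sum_{S,b}\alpha_{S,b}$. I would describe the randomized tree that samples $(S,b)$ with probability $\alpha_{S,b}/\alpha$, queries the $|S|\le d$ coordinates in $S$, and outputs $\alpha$ if $x|_S=b$ and $0$ otherwise. Linearity gives expected value $\sum_{S,b}\alpha_{S,b}[x|_S=b]=f(x)$, and the worst-case depth is $\max|S|\le d$. (If $f\equiv 0$ the claim is trivial, so $\alpha>0$.)

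For the reverse direction $\ldeg_+(f)\le \RE(f)$, I would extract a polynomial directly from a randomized tree $\mu$ of cost $d$ computing $f$ in expectation. Fix any deterministic tree $T$ of depth $\le d$ in the support of $\mu$, with nonnegative leaf labels $\{c_\ell^T\}$. The output of $T$ on input $x$ is
\[
T(x)=\sum_{\ell}c_\ell^T\cdot\mathbf{1}[x\text{ reaches leaf }\ell],
\]
and the indicator that $x$ reaches $\ell$ is a product of at most $d$ literals (one for each edge on the root-to-$\ell$ path), hence a degree-$\le d$ product of the form $\prod_{i\in S_\ell}((-1)^{b_i}x_i+b_i)$. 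Thus every $T(\cdot)$ is a nonnegative literal polynomial of degree $\le d$. Taking expectation,
\[
f(x)=\E_{T\sim\mu}[T(x)]=\sum_{S,b}\Big(\E_{T\sim\mu}\sum_{\ell: (S_\ell,b_\ell)=(S,b)} c_\ell^T\Big)\prod_{i\in S}((-1)^{b_i}x_i+b_i),
\]
and the bracketed coefficients are nonnegative, yielding a nonnegative literal polynomial of degree $\le d$ that equals $f$ on $\01^n$.

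The one subtlety I want to flag is the possibility of an infinite-support distribution $\mu$. I would handle this by observing that there are only finitely many tree \emph{structures} of depth $\le d$ over $n$ variables, so $\mu$ effectively decomposes into a finite mixture over structures, with a (possibly continuous) distribution on the nonnegative leaf values for each; in every case the expectation of a nonnegative real label is a well-defined nonnegative real, so the regrouping above is valid. Apart from this bookkeeping, both directions are short and the main point is the bijection between degree-$d$ nonnegative-literal monomials and depth-$d$ decision-tree paths.
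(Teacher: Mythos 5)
Your proposal is correct and takes essentially the same route as the paper: for $\RE(f)\le\ldeg_+(f)$ you sample a monomial $(S,b)$ with probability proportional to $\alpha_{S,b}$, query $S$, and output a scaled indicator, and for the converse you convert each root-to-leaf path into a product of at most $d$ literals weighted by the leaf label and then mix over $\mu$ — exactly the two constructions in the paper. The additional remark about a possibly infinite-support $\mu$ is a harmless piece of bookkeeping the paper glosses over.
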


\begin{proof}
\underline{$\RE(f) \geq \ldeg_+(f)$.}  
We need to show how a randomized decision tree induces a nonnegative literal polynomial.
First consider a deterministic decision tree $T$ with leaves labeled by nonnegative real
numbers.  For each path $p$ from root to leaf, we construct a literal monomial $m_p$ where $x_i$ appears in $m_p$ if $x_i=1$ is on $p$, and $1-x_i$ appears if $x_i=0$ is on $p$.  The coefficient $\alpha_p$ of $m_p$ is the label of the leaf of $p$.  If we let $q_T(x)=\sum_{\text{paths } p} \alpha_p m_p(x)$ then we have that $q_{T}(x)$ is equal to the output of the tree on input $x$.  Moreover the degree of $q_{T}$ is at most the depth of $T$.  Now for a randomized decision tree that chooses a deterministic decision tree $T$ with probability $\mu(T)$, we set the polynomial $r(x) =\sum_T \mu(T) q_T(x)$, which gives a nonnegative literal representation of $f$.  

\underline{$\RE(f) \leq \ldeg_+(f)$.}  Let 
$$
p(x) =\sum_{S \subseteq [n]} \sum_{b \in \01^{|S|}} \alpha_{S,b} \prod_{i \in S} ((-1)^{b_i}x_i + b_i)
$$
be a nonnegative literal polynomial representing~$f$ of degree $\ldeg_+(f)$.  Let $M=\sum_{S,b} \alpha_{S,b}$.  The algorithm chooses $S,b$ with probability $\alpha_{S,b}/M$ and query all $i \in S$ to evaluate $a_{S,b}=\prod_{i \in S} ((-1)^{b_i}x_i + b_i)$.  Output~$M\cdot a_{S,b}$.  The expected output on input~$x$ equals~$p(x)$, and the number of queries is $\leq\ldeg_+(f)$.
\end{proof}

Referring back to Section~\ref{sec:hierarchies}, this gives a connection between randomized query complexity in expectation and the Sherali-Adams hierarchy: the smallest $d$ for which a Sherali-Adams relaxation certifies the optimal upper bound $\alpha(f)$ on the maximization problem~\eqref{eq:opt}, is exactly $\RE(\alpha(f) - f)$.

\section{Quantum query complexity in expectation}\label{sec:qe}

Here we study \emph{quantum} query complexity in expectation, characterize it by sum-of-squares degree, and relate it to Lasserre. We assume familiarity with quantum computing~\cite{nielsen&chuang:qc} and query complexity~\cite{buhrman&wolf:dectreesurvey}.

\subsection{Definition}

We define the quantum query complexity of computing a function $f:\01^n\to \R_+$ in expectation. A $T$-query algorithm is described by unitaries $U_0, \ldots, U_T$ and a POVM $\{E_\theta\}_{\theta \in \Theta}$, where each $E_\theta$ is a psd matrix labeled by nonnegative real~$\theta$, and $\sum_{\theta \in \Theta} E_\theta=I$.  As usual, on input $x$ the query algorithm proceeds from the initial state $\ket{\bar 0}$ by alternately applying a unitary and the query oracle $O_x$ (which maps $\ket{i,b}\mapsto\ket{i,b\oplus x_i}$), so that the state of the algorithm after $t$ queries is 
$\ket{\psi_x^t} = U_t O_x \ldots O_x U_1 O_x U_0\ket{\bar 0}.$
Let $E=\sum_{\theta \in \Theta} \theta E_\theta$.  As the probability of output $\theta$ upon measuring 
$\ket{\psi_x^T}$ is $\Tr(E_\theta \ketbra{\psi_x^T}{\psi_x^T})$, the expected value of the output is $\Tr(E \ketbra{\psi_x^T}{\psi_x^T})$.
The algorithm \emph{computes~$f$ in expectation} if $f(x)=\Tr(E \ketbra{\psi_x^T}{\psi_x^T})$ for every $x \in \01^n$.  

\begin{definition}
The quantum query complexity of computing $f$ in expectation, denoted $\QE(f)$, is the minimum $T$ for which there is a $T$-query quantum algorithm computing $f$ in expectation.
\end{definition}

\subsection{Characterization of $\QE(f)$ by polynomials}

We now adapt the polynomial method~\cite{BBCMW01} to characterize $\QE(f)$. The key is the following lemma, which says that the amplitudes of the final state of a $T$-query algorithm are degree-$T$ polynomials in~$x$:

\begin{lemma}[\cite{BBCMW01}]\label{lem:key_poly}
The state $\ket{\psi_x^t}$ of a quantum query algorithm on input $x$ after $t$ queries can be written as $\sum_{i,z} \alpha_{i,z}(x) \ket{i,z}$, where each $\alpha_{i,z}(x)$ is an $n$-variate multilinear polynomial in $x$ of degree~$\leq t$.
\end{lemma}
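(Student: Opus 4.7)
The plan is to prove the lemma by induction on $t$, which is the standard argument from the polynomial method. Note that the statement is about the amplitudes expressed as functions of $x \in \01^n$; on the Boolean cube every polynomial agrees with a unique multilinear one (using $x_i^2 = x_i$), so we are free to reduce to multilinear form at every step without changing the function.

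Base case ($t=0$): the state is $U_0\ket{\bar 0}$, which does not depend on $x$, so every amplitude is a constant, i.e., a multilinear polynomial of degree $0 \le 0$.

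Inductive step: suppose $\ket{\psi_x^{t-1}} = \sum_{i,b,z'} \alpha_{i,b,z'}(x)\ket{i,b,z'}$, where $b$ denotes the one-bit query-answer register, $z'$ denotes the rest of the workspace, and each $\alpha_{i,b,z'}$ is multilinear of degree at most $t-1$. Applying the oracle $O_x : \ket{i,b,z'} \mapsto \ket{i,b\oplus x_i,z'}$ produces a state whose amplitude at $\ket{i,b,z'}$ is
\[
(1-x_i)\,\alpha_{i,b,z'}(x) + x_i\,\alpha_{i,b\oplus 1,z'}(x).
\]
Each such amplitude is a polynomial of degree at most $t$; after reducing $x_i^2$ to $x_i$ it is also multilinear. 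Finally, $U_t$ is an $x$-independent unitary, so it takes $\C$-linear combinations of the amplitudes; this preserves both the degree bound and multilinearity. Thus $\ket{\psi_x^t}$ has the required form.

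There is no real obstacle here: the only subtle point is making sure the query oracle is handled correctly in the computational basis that exposes the query bit $b$, and noting that on $\01^n$ we may always collapse to multilinear representatives so the degree bound is tight at each step. The unitaries $U_t$ contribute nothing to the degree because they are independent of $x$.
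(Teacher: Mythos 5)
Your induction is exactly the standard argument from the cited reference~\cite{BBCMW01}; the paper itself states this lemma as an imported fact without reproducing the proof. Your handling of the oracle step is correct: the new amplitude at $\ket{i,b,z'}$ is $(1-x_i)\,\alpha_{i,b,z'}(x) + x_i\,\alpha_{i,b\oplus 1,z'}(x)$, which raises the degree by at most one, and reducing modulo $x_i^2 = x_i$ at each step keeps the representatives multilinear while leaving the function on $\01^n$ unchanged; the $x$-independent unitaries only take linear combinations of amplitudes and thus preserve the degree bound. No gaps.
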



\begin{theorem}
\label{thm:polyn}
Let $f:\01^n\rightarrow\R_+$.
Then $\QE(f) = \deg_{sos}(f)$.
\end{theorem}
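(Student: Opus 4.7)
The plan is to prove both inequalities in $\QE(f) = \deg_{sos}(f)$, using \lemref{lem:key_poly} as the bridge between quantum amplitudes and polynomials.

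For $\QE(f) \ge \deg_{sos}(f)$, I start from a $T$-query algorithm computing $f$ in expectation, with final state $\ket{\psi_x^T}$ and measurement observable $E=\sum_\theta \theta E_\theta$. Since each $E_\theta$ is psd and $\theta \ge 0$, the observable $E$ is itself psd and admits a spectral decomposition $E=\sum_k \lambda_k \ketbra{v_k}{v_k}$ with $\lambda_k\ge 0$. Hence
\[
f(x)=\bra{\psi_x^T}E\ket{\psi_x^T}=\sum_k \lambda_k\,|\braket{v_k}{\psi_x^T}|^2.
\]
By \lemref{lem:key_poly} every amplitude of $\ket{\psi_x^T}$ is a polynomial in $x$ of degree $\le T$, so each inner product $\braket{v_k}{\psi_x^T}$ is a (possibly complex) polynomial of degree $\le T$; splitting it as $R_k(x)+i\,I_k(x)$ with real polynomials $R_k,I_k$ and absorbing $\sqrt{\lambda_k}$ yields
\[
f(x)=\sum_k\bigl(\sqrt{\lambda_k}\,R_k(x)\bigr)^2+\bigl(\sqrt{\lambda_k}\,I_k(x)\bigr)^2,
\]
an explicit degree-$T$ sum-of-squares representation of $f$, so $\deg_{sos}(f)\le T=\QE(f)$.

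For the reverse direction $\QE(f)\le\deg_{sos}(f)$, suppose $f=\sum_{i=1}^m p_i^2$ with $\deg p_i \le d$ (the case $f\equiv 0$ is trivial). Let $M=\max_x f(x)$ and set $g_i(x)=p_i(x)/\sqrt{M}$, so $\sum_i g_i(x)^2 \le 1$ on $\01^n$. The goal is a $d$-query algorithm whose final state $\ket{\psi_x}$ satisfies $\braket{i,0}{\psi_x}=g_i(x)$ on an orthonormal family of designated ``output'' states $\ket{i,0}$, with all remaining amplitude sitting in an orthogonal ``garbage'' subspace. Measuring with the projective POVM $\{\Pi=\sum_i\ketbra{i,0}{i,0},\;I-\Pi\}$ and labeling these two outcomes by $M$ and $0$ respectively then produces expected output $M\cdot\sum_i g_i(x)^2=f(x)$, as required.

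The main obstacle is establishing this converse to \lemref{lem:key_poly}: polynomials of degree $\le d$ whose squared sum is bounded by $1$ on $\01^n$ must really arise simultaneously as amplitudes of a $d$-query algorithm. I would construct such a state explicitly from the monomial expansions $g_i(x)=\sum_{|S|\le d}c_{i,S}\chi_S(x)$ with $\chi_S(x)=\prod_{j\in S}x_j$, introducing an ancilla register indexed by pairs $(i,S)$: a query-free unitary $U_0$ prepares a fixed unit vector with amplitude proportional to $c_{i,S}$ on each ``track'' $\ket{i,S,\cdot}$; then the $d$ query slots each query the next element of the track's set $S$, controlled on a flag bit so that the action is trivial once the set is exhausted, so that after all $d$ queries the amplitude of track $(i,S)$ has been multiplied by $\chi_S(x)$; and finally a data-independent unitary $U_d$ coherently rotates the tracks $\{\ket{i,S,\cdot}\}_S$ (for each fixed $i$) onto the single state $\ket{i,0}$, producing amplitude $g_i(x)$ there. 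The small example $f=\tfrac12(x_1+x_2)^2$ (of sos-degree $1$) is a clean prototype: one query applied to $\tfrac{1}{\sqrt 2}(\ket{1}+\ket{2})\ket{0}$ produces $\tfrac{1}{\sqrt 2}(\ket{1,x_1}+\ket{2,x_2})$, and a final Hadamard on the index register then puts amplitude $\tfrac12(x_1+x_2)=g_1(x)$ on a single basis state, yielding $f(x)$ in expectation with $M=2$.
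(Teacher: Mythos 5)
Your proof of $\QE(f)\ge\deg_{sos}(f)$ is correct and matches the paper's argument in substance: diagonalize a psd observable, apply \lemref{lem:key_poly} to the amplitudes, split into real and imaginary parts. (The paper decomposes each $E_\theta$ separately and leaves the complex-to-real step implicit, but this is cosmetic.)

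Your proof of $\QE(f)\le\deg_{sos}(f)$ takes a genuinely different route from the paper and has a real gap. You expand each $p_i$ in the monomial basis, prepare amplitudes $\propto c_{i,S}$ on tracks $\ket{i,S,\cdot}$, query the bits of $S$ into ancillas, and then apply a fixed unitary $U_d$ to collect amplitude $g_i(x)$ on $\ket{i,0}$. This cannot work in general, because the bit-flip oracle $O_x:\ket{j,b}\mapsto\ket{j,b\oplus x_j}$ never changes amplitude \emph{magnitudes}: any track whose set $S$ does not see a coordinate where two inputs $x,x'$ differ contributes a positive floor to $\braket{\psi_x}{\psi_{x'}}$, and no fixed unitary can then map one of these states entirely onto $\ket{i,0}$ and the other entirely off it. Concretely, take $f(x)=(1-x_1)^2$, so $p_1=1-x_1$, $d=1$, $M=1$, $g_1=1-x_1$, with monomial coefficients $c_{1,\emptyset}=1$, $c_{1,\{1\}}=-1$. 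After the one query the states are $\ket{\psi_0}=\tfrac{1}{\sqrt 2}(\ket{\emptyset,0}-\ket{\{1\},0})$ and $\ket{\psi_1}=\tfrac{1}{\sqrt 2}(\ket{\emptyset,0}-\ket{\{1\},1})$, so $\braket{\psi_0}{\psi_1}=\tfrac12$. You require $\bra{1,0}U_1\ket{\psi_0}=g_1(0)=1$, forcing $U_1\ket{\psi_0}=\ket{1,0}$ up to phase, and $\bra{1,0}U_1\ket{\psi_1}=g_1(1)=0$, forcing $U_1\ket{\psi_1}\perp\ket{1,0}$; hence $\braket{U_1\psi_0}{U_1\psi_1}=0\ne\tfrac12$, contradicting unitarity. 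Your warm-up $f=\tfrac12(x_1+x_2)^2$ works only because every monomial there has nonempty support contained in the coordinates being queried, so no such floor arises. The paper avoids this obstruction by writing $p$ in the Fourier basis $\{(-1)^{x\cdot s}\}_{|s|\le d}$ and implementing the \emph{phase} map $\ket{s}\mapsto(-1)^{x\cdot s}\ket{s}$ with $d$ queries; phase kickback changes signs rather than writing bits, so the post-query states for different inputs can become orthogonal (in the example above one gets $\tfrac{1}{\sqrt 2}(\ket{0}+\ket{1})$ versus $\tfrac{1}{\sqrt 2}(\ket{0}-\ket{1})$), and the Hadamard transform then deposits $p(x)$ as a single amplitude. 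To rescue a monomial-track construction you would at least need to aim for amplitude $\alpha\, g_i(x)$ with some $\alpha<1$ and rescale the output accordingly, and then prove such an $\alpha>0$ always exists; it is much cleaner to adopt the Fourier/phase-kickback construction the paper uses.
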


\begin{proof}
\underline{$\QE(f) \geq \deg_{sos}(f)$.}
Say there is a $T$-query algorithm to compute $f$ in expectation.  Then 
\[
f(x) = \sum_\theta \theta \bra{\psi_x^T} E_\theta \ket{\psi_x^T}.
\]
As the coefficients $\theta$ are nonnegative real numbers, it suffices to show that each term $\bra{\psi_x^T} E_\theta \ket{\psi_x^T}$ can be written as the sum of squares of polynomials of degree at most~$T$.  

Let $E_\theta = \sum_i \lambda_i \ketbra{e_\theta^i}{e_\theta^i}$ be the eigenvalue decomposition of $E_\theta$, 
where each $\lambda_i \ge 0$.  Then 
\[
\bra{\psi_x^T} E_\theta \ket{\psi_x^T} = \sum_i \lambda_i |\braket{\psi_x^T}{e_\theta^i}|^2 \enspace. 
\]
We have that $\braket{\psi_x^T}{e_\theta^i}$ is a linear combination of amplitudes of $\ket{\psi_x^T}$, hence by Lemma~\ref{lem:key_poly} it is a degree $\le T$ polynomial in $x$. Since $\lambda_i \ge 0$ this gives a representation of $\bra{\psi_x^T} E_\theta \ket{\psi_x^T}$ as a sum-of-squares polynomial of degree $\le T$.  Hence $T\geq \deg_{sos}(f)$.

\underline{$\QE(f) \leq \deg_{sos}(f)$.}
Let $d=\deg_{sos}(f)$.
We first exhibit a quantum algorithm for the special case where $f=p^2$ for some degree-$d$ polynomial~$p$.
This is inspired by the proof of~\cite[Theorem~2.3]{wolf:nqj}.
Let $p=\sum_s\widehat{p}(s)(-1)^{x\cdot s}$ be the Fourier representation of~$p$, where $s$ ranges over $\01^n$.
Because $p$ has degree~$d$, we have $\widehat{p}(s)\neq 0$ only if $|s|\leq d$.
The algorithm is as follows:
\begin{enumerate}
\item Prepare $n$-qubit state $c\sum_s\widehat{p}(s)\ket{s}$, where $c=1/\sqrt{\sum_s\widehat{p}(s)^2}$ is a normalizing constant.
\item Apply a unitary that maps $\ket{s}\mapsto(-1)^{x\cdot s}\ket{s}$ for all $s$ of weight $|s|\leq d$; one can show that this can be implemented using $d$ queries.
\item Apply the $n$-qubit Hadamard transform to the state.
\item Measure the state and output $2^n/c^2$ if the measurement result was $0^n$, otherwise output~0.
\end{enumerate}
Note that the amplitude of the basis state $\ket{0^n}$ after step~3 is
$$
\frac{c}{\sqrt{2^n}}\sum_s\widehat{p}(s)(-1)^{x\cdot s}=\frac{c}{\sqrt{2^n}}p(x).
$$
Hence the probability that the final measurement results in outcome $0^n$ is $(\frac{c}{\sqrt{2^n}}p(x))^2$, and the expected value of the output is $(\frac{c}{\sqrt{2^n}}p(x))^2\cdot 2^n/c^2=p(x)^2=f(x)$, as desired.

Now consider the general case where $f=\sum_{i\in{\cal P}} p_i^2$.
The algorithm chooses one $i\in{\cal P}$ uniformly at random and runs the above algorithm to produce an output with expected value $p_i(x)^2$.  It finally outputs that output multiplied by $|{\cal P}|$. Clearly, the algorithm uses 
at most $d$ queries to~$x$, and the expected value of its final output is 
$$
\frac{1}{|{\cal P}|}\sum_i p_i(x)^2|{\cal P}|=\sum_i p_i(x)^2=f(x).
$$
Hence $\QE(f) \leq d=\deg_{sos}(f)$.
\end{proof}

This gives a surprising connection between quantum query complexity in expectation and the Lasserre hierarchy: 
the smallest level $d$ of the Lasserre hierarchy that certifies the optimal upper bound $\alpha(f)$ on the 
maximization problem~\eqref{eq:opt}, is exactly $\QE(\alpha(f) - f)$.

\section{Gaps and relations between $\RE(f)$ and $\QE(f)$}\label{secqcgap}

For some $f:\01^n\rightarrow\R_+$, the quantum query complexity in expectation~$\QE(f)$ can be \emph{much} smaller than its classical counterpart~$\RE(f)$.  An extreme example is the $n$-bit function $f(x)=(|x|-1)^2$, where $\QE(f)=1$ by Theorem~\ref{thm:polyn}, but $\RE(f)=n$. The latter holds because on the all-0 input the algorithm needs to produce a nonzero output with positive probability, but on weight-1 inputs it can never output anything nonzero, hence a classical algorithm needs $n$ queries on the all-0 input. 

In contrast, if the range of $f$ is Boolean, then $\QE(f)$ is at most polynomially smaller than $\RE(f)$:

\begin{theorem}\label{thBooleanREQE}
For every $f:\01^n\rightarrow\01$ we have $\RE(f) \leq 16 \QE(f)^3$.
\end{theorem}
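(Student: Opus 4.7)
The plan is to compose three known (or easy) facts, translated through the two polynomial characterizations. By \thmref{thRE=ldeg+} and \thmref{thm:polyn}, the target bound is equivalent to $\ldeg_+(f) \leq 16\deg_{sos}(f)^3$ for all $f:\01^n\to\01$, so I will work entirely with polynomial degrees and ordinary deterministic decision tree complexity $D(f)$.

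First I would observe that a sum-of-squares representation $f = \sum_i p_i^2$ with $\deg(p_i) \le d = \deg_{sos}(f)$ immediately gives a (multilinear) polynomial representation of $f$ of degree at most $2d$, so $\deg(f) \le 2\deg_{sos}(f)$. This is where Booleanness is crucial only implicitly: the bound is true regardless of range, but the next step needs it.

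Next I would invoke the Nisan--Smolensky / Midrijanis bound which says that for any total Boolean function $f:\01^n\to\01$, the deterministic decision tree complexity is at most cubic in the exact polynomial degree, specifically $D(f) \le 2\deg(f)^3$. Combining gives $D(f) \le 2(2\deg_{sos}(f))^3 = 16\deg_{sos}(f)^3$.

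Finally I would show $\ldeg_+(f) \le D(f)$ by the obvious conversion: take a depth-$D(f)$ deterministic decision tree for $f$, and write
\[
f(x) = \sum_{\ell \text{ a leaf labeled } 1} \prod_{i\in S_\ell}\bigl((-1)^{b_{\ell,i}}x_i + b_{\ell,i}\bigr),
\]
where $S_\ell$ is the set of queried variables on the path to $\ell$ and $b_{\ell,i}$ the corresponding answers. Each monomial is a product of at most $D(f)$ literals with coefficient $1\ge 0$, and exactly one leaf is ``on'' for each $x$, so this is a valid nonnegative literal representation of $f$ of degree at most $D(f)$. Chaining the three inequalities yields $\RE(f) = \ldeg_+(f) \le 16\deg_{sos}(f)^3 = 16\QE(f)^3$. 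The only nontrivial ingredient is the cubic polynomial-degree vs.\ decision-tree bound, which is where the constant $16$ comes from; the rest is bookkeeping from the characterizations already proved in the paper.
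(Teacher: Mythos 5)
Your proof is correct and follows essentially the same route as the paper: both chain $\deg(f)\le 2\deg_{sos}(f)=2\QE(f)$, Midrijanis's bound $D(f)\le 2\deg(f)^3$, and the trivial $\RE(f)\le D(f)$ (you phrase the last step as $\ldeg_+(f)\le D(f)$ via the explicit decision-tree-to-polynomial conversion, which is equivalent by \thmref{thRE=ldeg+}). The only cosmetic difference is that you translate everything into polynomial-degree language before chaining, whereas the paper chains the query-complexity measures directly.
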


\begin{proof}
The result follows by chaining the following three inequalities:
\begin{enumerate}
\item $\RE(f)$ is obviously at most the deterministic decision tree complexity of $f$, denoted $D(f)$;
\item $D(f)\leq 2 \deg(f)^3$ by a result of Midrijanis~\cite[Theorem~4]{midrijanis:exact};
\item $\deg(f)\leq 2 \QE(f)$, because by Theorem~\ref{thm:polyn} a $T$-query $\QE$-algorithm gives a degree-$T$ sum-of-squares polynomial that represents $f$, which is a polynomial of degree $\leq 2T$.
\end{enumerate}
\vspace*{-2em}
\end{proof}

The main reason this query complexity result is interesting is that the analogous statement for \emph{communication} complexity is equivalent to the longstanding log-rank conjecture!  The communication version of Theorem~\ref{thBooleanREQE} would say that for all \emph{Boolean} matrices $M$, the quantum and classical communication complexity of computing $M$ in expectation are at most polynomially far apart.  As noted by Fiorini et al.~\cite{FioriniMassarPokuttaTiwaryDewolf2012}, this is equivalent to $\log \rk_+(M) \leq \polylog(\psdrk(M))$, which in turn is equivalent to the log-rank conjecture.  Presumably such a communication version will be substantially harder to prove than the above query version.  However, in many cases results in query complexity ``mirror'' (often much harder) results in communication complexity, so our Theorem~\ref{thBooleanREQE} may be viewed as (weak) evidence for the log-rank conjecture.

\section{Quantum query complexity lower bound}\label{ssecsoslowerbound}
In this section we show that the function $f(x)=(|x|-1)(|x|-2)$ has $\QE(f) = \Omega(\sqrt{n})$.  We do this 
by showing the corresponding lower bound on the sum-of-squares degree of $f$, adapting techniques
from approximation theory commonly used to show quantum query lower bounds in the bounded-error model.


We do this by using Theorem~\ref{thm:polyn} and bounding the sum-of-squares degree.  As is common in query complexity lower bounds by the polynomial method \cite{BBCMW01}, we will use a symmetrization argument to define a single-variate polynomial $Q:\R\rightarrow \R$ that behaves well on $[n]$, and then use Markov's lemma from approximation theory to bound the degree of $Q$.

A new complication in our setting is the following.  If $f(x) = \sum_i p_i(x)^2$ then we would like to define a ``symmetrized'' polynomial $g: [n] \rightarrow \R$ where $g(k)=\E_{x: |x|=k} \left[ \sum_i p_i(x)^2 \right]$.  We do not know how to show, however, that $g$ remains a nonnegative polynomial.  To get around this, we define symmetrized polynomials $q_i(k)=\E_{x: |x|=k} \left[p_i(x)\right]$ for each $p_i$ individually, then recombine the symmetrized polynomials as $Q(k) = \sum_i q_i(k)^2$.  We are then able to bound the sum-of-squares degree of~$Q$.

\begin{theorem}
Let $f(x)=(|x|-1)(|x|-2)$ for $x \in \01^n$.  Then 
$\displaystyle \deg_{sos}(f) \ge \sqrt{n/48}$.
\end{theorem}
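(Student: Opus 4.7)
The plan is to follow the authors' roadmap: assume $f(x) = \sum_{i\in\mathcal{P}} p_i(x)^2$ with $\deg p_i \le d$, symmetrize each $p_i$ individually, and derive a univariate polynomial inequality from which the $\sqrt{n}$ lower bound on $d$ drops out via classical approximation-theoretic tools.

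First, for each $i$ I would define $q_i(k) = \E_{x : |x|=k}[p_i(x)]$, which by the Minsky--Papert symmetrization is a univariate polynomial of degree at most $d$. Let $Q(k) = \sum_i q_i(k)^2$, a polynomial of degree at most $2d$ that is globally nonnegative on $\R$. Cauchy--Schwarz gives $q_i(k)^2 \le \E_{|x|=k}[p_i(x)^2]$, so $Q(k) \le \E_{|x|=k}[f(x)] = (k-1)(k-2)$ for $k \in \{0,1,\ldots,n\}$. In particular $Q(1) = Q(2) = 0$, while $Q(0) = f(0^n) = 2$ exactly (since $x=0^n$ is the unique weight-zero input).

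Because $Q$ is nonnegative on all of $\R$ and vanishes at $k=1,2$, both are roots of even multiplicity, so $Q(k) = (k-1)^2(k-2)^2 R(k)$ for a nonnegative polynomial $R$ of degree at most $2d-4$. Setting $\tilde{Q}(k) = (k-1)(k-2) R(k) = Q(k)/((k-1)(k-2))$ gives a polynomial of degree at most $2d-2$ satisfying $\tilde{Q}(0)=1$, $\tilde{Q}(1)=\tilde{Q}(2)=0$, and $\tilde{Q}(k) \in [0,1]$ for $k\in\{3,\ldots,n\}$. In particular $|\tilde{Q}(k)| \le 1$ on every integer in $\{0,\ldots,n\}$, yet $\tilde{Q}$ drops from $1$ to $0$ across the short interval $[0,1]$.

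The finish combines two classical inequalities. An Ehlich--Zeller / Rivlin--Cheney type estimate says that a polynomial of degree $D$ bounded by $1$ on integers in $\{0,\ldots,n\}$ is bounded by a small constant (say $2$) on the continuous interval $[0,n]$, provided $D = O(\sqrt{n})$; if this condition fails then $d \ge \Omega(\sqrt n)$ already. In the former regime, Markov's inequality on $[0,n]$ yields $\max_{[0,n]}|\tilde{Q}'| \le 2(2d-2)^2 \max_{[0,n]}|\tilde{Q}|/n$, while the mean value theorem applied to $\tilde{Q}(0) - \tilde{Q}(1) = 1$ produces $\xi \in [0,1]$ with $|\tilde{Q}'(\xi)| \ge 1$. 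Chaining these forces $(2d-2)^2 = \Omega(n)$, and tracking the constants carefully delivers $d \ge \sqrt{n/48}$.

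The main obstacle is that one cannot simply symmetrize $\sum_i p_i(x)^2$ as a single object: the resulting univariate polynomial is visibly nonnegative only on $\{0,\ldots,n\}$, not on all of $\R$, and in particular one loses the right to factor out $(k-1)^2(k-2)^2$ with an everywhere-nonnegative cofactor. Symmetrizing each $p_i$ separately and then re-squaring, as flagged by the authors, is the crucial maneuver that preserves global nonnegativity and makes the auxiliary polynomial $\tilde{Q}$ available; everything else is the standard Markov/Bernstein lower-bound template applied to the two ``data points'' $\tilde{Q}(0) = 1$ and $\tilde{Q}(1) = 0$.
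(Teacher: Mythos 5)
Your proposal follows the paper's core strategy exactly: symmetrize each $p_i$ separately to $q_i$, form $Q=\sum_i q_i^2$ (which, crucially, stays nonnegative on all of $\R$), exploit $Q(0)=2$, $Q(1)=Q(2)=0$, and $Q(k)\le (k-1)(k-2)$ at integers (via Jensen/Cauchy--Schwarz), factor out the forced double zeros at $1$ and $2$, and finish with Markov-type approximation theory. Where you diverge is in the endgame. You form $\tilde Q = Q/((k-1)(k-2))$, which has degree $\le 2d-2$ and is sandwiched in $[0,1]$ at every integer in $\{0,\dots,n\}$ with $\tilde Q(0)=1$ and $\tilde Q(1)=0$; you then need a Coppersmith--Rivlin/Ehlich--Zeller-type lemma to promote the discrete $[0,1]$ bound to a continuous bound on $[0,n]$ before Markov applies. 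The paper instead works with $q=Q/((k-1)^2(k-2)^2)$, which has degree $\le 2T-4$ and satisfies $q(0)=1/2$ and $q(k)\le 1/6$ at integers $k\ge 4$, and it bypasses the discrete-to-continuous step by bounding the \emph{ratio} $\max|q'|/\max|q|$ directly: wherever $|q|$ attains its maximum $c$ on $[0,n]$, a nearby integer (within distance $4$) has $q\le 1/6$, which already forces a large derivative relative to $c$, so an absolute continuous bound on $q$ is never needed. Your route is perfectly sound and arguably more ``textbook,'' but it imports an extra lemma the paper avoids, and the constant $\sqrt{n/48}$ you quote is inherited from the paper rather than derived; carrying out your version carefully will give some $d\ge c'\sqrt n$ with a constant $c'$ that depends on the precise Ehlich--Zeller constant you use, not necessarily matching $1/\sqrt{48}$. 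If you want to match (or beat) the paper's constant within your own framework, note that the paper's ratio trick applies verbatim to your $\tilde Q$: since $\tilde Q\in[0,1]$ at integers with a unit drop between $0$ and $1$, the ratio $\max|\tilde Q'|/\max|\tilde Q|$ on $[0,n]$ is bounded below by an absolute constant without any discrete-to-continuous lemma, and the degree bound $\deg\tilde Q\le 2d-2$ (versus $2T-4$ for $q$) then gives a slightly cleaner final constant.
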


\begin{proof}
Suppose that $f$ can be expressed as
\begin{equation*}
f(x) = \sum_{i} p_{i}(x)^{2},
\end{equation*}
where $\deg(p_{i})\leq T$ for all $i$. Let $q_{i} : [n] \to \mathbb{R}$ be defined as $q_{i}(k) = \mathbb{E}_{|x| = k} [p_{i}(x)]$.  By a standard symmetrization argument \cite{minskypapert87}, each $q_i$ is a polynomial of 
degree at most $T$.  Now consider
\begin{equation*}
Q(k) = \sum_{i} q_{i}(k)^{2},
\end{equation*}
which is a nonnegative polynomial in $k$ of degree at most $2T$. It satisfies $Q(0) = 2$, since there is only one $x$ of weight~0. Also, $Q(1) = Q(2) = 0$ since $f(x) = 0$ for $|x| \in \{1, 2\}$. The zeroes of a nonnegative polynomial must have even multiplicity, so at least $2$. Therefore there must exist a polynomial $q$ of degree at most $2T - 4$ such that
\begin{equation*}
Q(k) = (k - 1)^2 (k - 2)^2 q(k).
\end{equation*}
By convexity of the quadratic function, we find that
\begin{equation*}
Q(k) = \sum_{i} |q_{i}(k)|^{2} =  \sum_{i} \big| \mathbb{E}_{|x| = k} [p_{i}(x)] \big|^2 \leq \sum_{i} \mathbb{E}_{|x| = k} \big[ |p_{i}(x)|^{2} \big] = \mathbb{E}_{|x| = k} [f(x)] = (k - 1)(k - 2),
\end{equation*}
which implies
\begin{equation}
\label{eq:qbounded}
q(k) \leq 1/(k - 1)(k - 2).
\end{equation}
Note that $q(k)\leq 1/6$ for all integers $k\in\{4,\ldots,n\}$.\footnote{While we know that $q$ is nonnegative on $[n]$, we will not use this information.}  
We now simply lower bound the degree of $q$ using the following lemma of Markov:

\begin{lemma}[Markov]
If $q$ is a real polynomial then 
$\displaystyle\deg(q) \ge \sqrt{\frac{n}{2} \cdot \frac{\max_{x\in[0,n]} |q'(x)|}{\max_{x\in[0,n]} |q(x)|}}.$
\end{lemma}
Here $q'$ denotes the derivative of $q$.  Since $q(0)=Q(0)/4=1/2$, we know that the maximum value of $q$ in the interval $[0,n]$ is at least $1/2$.  Now suppose $\max_{x\in[0,n]} |q(x)|=c \ge 1/2$, and say that this maximum is attained at $x^*$. Since $q(k)\leq 1/6$ for all integers $k\in\{4,\ldots,n\}$, we know $x^*$ is at most distance $4$ from an~$x$ where $q(x) \leq 1/6$.  Thus $|q'(x)|\geq (6c-1)/24$ for some $x\in[0,n]$.  This, together with $c\geq 1/2$, shows that the ratio in Markov's lemma is at least 
$$
\frac{6c-1}{24c}=\frac{1}{4}-\frac{1}{24c} \ge \frac{1}{6}.
$$
Thus overall we obtain $2T\geq \deg(q) \ge \sqrt{\frac{n}{2}\frac{1}{6}}=\sqrt{n/12}$, implying the lower bound. 
\end{proof}

We note that stronger lower bounds on sum-of-squares degree are known for related functions.  Let $k=\floor{\tfrac{n}{2}}$ and consider $g(x)=(x_1 + \cdots + x_n -k)(x_1 + \cdots + x_n - k -1)$.  This polynomial is nonnegative on all $x \in \01^n$, and the induced matrix $M_g(x,y)=g(x \wedge y)$ is a submatrix of the slack matrix of the correlation polytope by \lemref{lem:quad_poly}.  For odd $n$, Grigoriev~\cite{grigoriev01} shows that the sum-of-squares degree of $g$ is $\floor{\tfrac{n}{2}}$ (see also \cite{laurent03}). Blekherman et al.~\cite{bgp:sosonhypercube} show that $g$ even has high \emph{rational} sum-of-squares degree: if the product $pg$ has sos degree $d$, where $p$ is an sos polynomials of degree~$r$, then $r+d \ge \floor{\tfrac{n}{2}}$.

Our lower bound technique is quite different from those used in these works, and is more closely related to works showing bounds on the minimum degree of a polynomial that \emph{approximates} a function in $\ell_\infty$ norm. In fact, our proof has recently been extended by Arunachalam, Yuen, and the last author~\cite{ayw:approxbound} to show that this $\Omega(\sqrt{n})$ sos-degree lower bound remains valid for functions $g$ that approximate $f$ pointwise up to additive error $O(1/n)$.  This is important because the very recent framework of Lee et al.~\cite{lrs:cspsdp} uses lower bounds on the sum-of-squares degree of a function that approximates $f$ pointwise 
to show lower bounds on the psd rank of a matrix associated with $f$.

\section{Psd rank and query complexity in expectation}
\label{sec:comm-query}

\subsection{Psd rank characterizes two-way quantum communication complexity}

Fiorini et al.~\cite{FioriniMassarPokuttaTiwaryDewolf2012} defined a \emph{one-way} model of quantum communication to compute a matrix in expectation, and showed that this complexity is characterized by the logarithm of the psd rank.  We show below that this characterization continues to hold for the more general \emph{two-way} communication model, which allows multiple rounds of communication between the two parties Alice and Bob.  Hence one-way and two-way quantum communication complexity are the same for computation in expectation. 

We will not formally define the model of two-way quantum communication complexity (see~\cite{wolf:qccsurvey} for more technical details), instead just highlighting the differences of the model of computing a function in expectation 
to the normal model.  As usual, Alice and Bob each start with their own input, $x$ and $y$ respectively, and then the protocol specifies whose turn it is to speak and what message they send to the other party. At the end of the protocol Bob must output a \emph{nonnegative} number, which is a random variable~$z$ that depends on the inputs $x$ and $y$ as well as on the internal randomness of the protocol.  

The major difference with the usual model
is the notion of when a protocol is correct.  Let $M$ be a matrix with nonnegative real entries whose rows are indexed by Alice's possible inputs, and whose columns are indexed by Bob's inputs. We say a protocol \emph{computes the matrix~$M$ in expectation} if, for every $(x,y)$, $M(x,y)$ equals the expected value of the output~$z$ on input~$(x,y)$.
As usual, the \emph{cost} of the protocol is the worst-case number of qubits that are communicated 
(summed over all rounds).

\begin{definition}
The quantum communication complexity of computing a matrix $M$ in expectation, denoted $\QCE(M)$, is the minimum $q$ such that there exists a quantum protocol of cost $q$ that computes $M$ in expectation. The minimum $q$ when we restrict to one-way protocols is denoted $\QCE^1(M)$.
\end{definition}

The following theorem shows that two-way quantum communication complexity is not more powerful than its 
one-way cousin: both are characterized by the psd rank.

\begin{theorem}\label{th:psdrkqce}
$\log \psdrk(M)\leq \QCE(M)\leq \QCE^1(f) \leq \log(\psdrk(M)+1)$.
\end{theorem}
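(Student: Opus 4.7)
The middle inequality $\QCE(M) \leq \QCE^1(M)$ is immediate since one-way protocols form a restricted subclass of two-way ones, so I would dispense with it in one line and focus on the two nontrivial bounds.

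For the upper bound $\QCE^1(M) \leq \log(\psdrk(M)+1)$, I would fix a psd factorization $M(x,y) = \Tr(A_x B_y)$ of size $d = \psdrk(M)$, choose normalization constants $c_A \geq \max_x \Tr(A_x)$ and $c_B \geq \max_y \|B_y\|_{op}$, and exhibit the following one-way protocol. Alice prepares on $\C^{d+1} = \C^d \oplus \C$ the mixed state $\sigma_x = (A_x / c_A) \oplus (1 - \Tr(A_x)/c_A)$ and sends it to Bob using $\lceil \log(d+1) \rceil$ qubits; note $\sigma_x$ is a valid density matrix by the choice of $c_A$. Bob measures with the two-outcome POVM $\{F, I - F\}$ on $\C^{d+1}$, where $F$ equals $B_y / c_B$ on the first $d$ coordinates and $0$ on the flag (a valid POVM element by the choice of $c_B$). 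He outputs $c_A c_B$ on the first outcome and $0$ otherwise. A direct computation gives $\E[z] = c_A c_B \cdot \Tr(\sigma_x F) = c_A c_B \cdot \Tr((A_x/c_A)(B_y/c_B)) = M(x,y)$.

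For the lower bound $\log \psdrk(M) \leq \QCE(M)$, I would run a two-way protocol of cost $q = \QCE(M)$ and extract a psd factorization from its final state. By the Kremer-Yao decomposition for two-way quantum protocols, the final joint pure state has the form $|\psi_{x,y}\rangle = \sum_{i=1}^{2^q} |\alpha_i^x\rangle_A \otimes |\beta_i^y\rangle_B$, where $|\alpha_i^x\rangle$ depends only on $x$ (and the protocol) and $|\beta_i^y\rangle$ depends only on $y$. Letting $E = \sum_\theta \theta E_\theta$ denote the psd observable summarizing Bob's final POVM and its labels, I would expand
\[
M(x,y) = \Tr\bigl((I \otimes E) |\psi_{x,y}\rangle \langle \psi_{x,y}|\bigr) = \sum_{i,j} \langle \alpha_j^x | \alpha_i^x \rangle \, \langle \beta_j^y | E | \beta_i^y \rangle .
\]
Defining $P_x \in \C^{2^q \times 2^q}$ with $(P_x)_{ij} = \overline{\langle \alpha_i^x | \alpha_j^x \rangle}$ and $Q_y \in \C^{2^q \times 2^q}$ with $(Q_y)_{ij} = \langle \beta_i^y | E | \beta_j^y \rangle$, this rewrites as $M(x,y) = \Tr(P_x Q_y)$. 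Both matrices are Gram matrices and hence psd: $P_x$ is the entrywise conjugate of the Gram matrix of $\{|\alpha_i^x\rangle\}_i$, and $Q_y$ is the Gram matrix of $\{E^{1/2} |\beta_i^y\rangle\}_i$. This yields a psd factorization of size $2^q$, giving $\psdrk(M) \leq 2^{\QCE(M)}$.

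The main technical point I expect to be delicate is the Kremer-Yao decomposition and the bookkeeping that follows: one must verify that the standard two-way decomposition applies in the expectation model (which only affects how outputs are interpreted, not the structure of the state), and one must track conjugates carefully so that the expansion of $\Tr((I \otimes E)|\psi_{x,y}\rangle\langle\psi_{x,y}|)$ really does match $\Tr(P_x Q_y)$ with both factors psd of the advertised dimension $2^q$. The other two inequalities are essentially a direct construction and a one-line inclusion.
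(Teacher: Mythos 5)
Your proof is correct and follows the same overall strategy as the paper (Kremer--Yao for the lower bound, explicit protocol for the upper bound), but the execution of the lower bound differs in a way worth noting. The paper first replaces the output distribution by a $\{0,m\}$-valued one via probabilistic rounding (output $m$ with probability $m'/m$), reducing to a two-outcome measurement of a single flag qubit; the psd factors are then plain Gram matrices of the Kremer--Yao decomposition vectors restricted to flag-value~$1$. You instead keep the general POVM, package it into the observable $E=\sum_\theta \theta E_\theta$, and absorb it into the Bob-side Gram matrix via $E^{1/2}$. Your route avoids the rounding step and is arguably cleaner, at the modest cost of invoking the psd square root of the (finite, bounded) observable. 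Your attention to complex conjugation is also well-placed: with $(P_x)_{ij}=\overline{\braket{\alpha_i^x}{\alpha_j^x}}$ the trace identity $M(x,y)=\Tr(P_x Q_y)$ comes out exactly, whereas defining both factors as un-conjugated Gram matrices (as the paper's wording suggests for $A_x$ and ``similarly $B_y$'') would produce $\Tr(A_x \bar{B}_y)$ rather than $\Tr(A_x B_y)$; your version is the cleanly stated one. On the upper bound, the paper simply cites Fiorini et al., while you spell out a normalized-state-plus-flag protocol, which is a correct and standard instantiation of the same idea. The only small imprecision is $\lceil \log(d+1)\rceil$ versus $\log(d+1)$ when $d+1$ is not a power of two, but this is a rounding convention the paper also implicitly adopts, so it is not a gap.
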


\begin{proof}
The second inequality is obvious from the definitions.
Fiorini et al.~\cite{FioriniMassarPokuttaTiwaryDewolf2012} already showed how to construct a one-way protocol to compute $M$ in expectation using $\log(\psdrk(M)+1)$ many qubits, establishing the third inequality.  Thus we focus on the first inequality. Given a general protocol that computes~$M$ in expectation using $q$ qubits of communication, we need to construct a psd factorization of size~$2^q$.

The first step is to observe that one can replace the range of outputs of a multi-round $q$-qubit $\QCE$-protocol by $\{0,m\}$, where $m$ is the maximum output among all runs of the protocol: instead of outputting $m'$, just output $m$ with probability $m'/m$ and $0$ with probability $1-m'/m$, which preserves the expected value of the output.  In the remainder of the proof we assume for ease of notation that $m=1$.

Now use the Kremer-Yao lemma~\cite{kremer:thesis,yao:qcircuit} on this modified multi-round $q$-qubit communication protocol: its final state on input $x,y$ can be written as
$$
\sum_{i \in \01^{q+1}} \ket{a_i(x)} \ket{i_{q+1}}\ket{b_i(y)},
$$
where $\ket{a_i(x)}$ and $\ket{b_i(y)}$ are non-normalized states, and $i_{q+1}$ is the last bit of string $i$, 
corresponding to the output (0 or 1). Define $2^q$-by-$2^q$ psd matrices $A_x(i,j)=\braket{a_i(x)}{a_j(x)}$ where $i,j$ range over all $(q+1)$-bit strings that end in~1. Similarly define $B_y$. The expected value of the output is the probability to output~1:
$$
\norm{\sum_{i \in \01^q\times\{1\}}\ket{a_i(x)}\ket{1}\ket{b_i(y)}}^2=\sum_{i,j\in\01^q\times\{1\}}\braket{a_i(x)}{a_j(x)}\cdot\braket{b_i(y)}{b_j(y)}=\Tr(A_x B_y).
$$  
Thus a multi-round $q$-qubit protocol gives a psd factorization of~$M$ of size~$2^q$.
\end{proof}

\subsection{Upper bounds on psd rank from quantum algorithms}

We now show that efficient quantum query algorithms for computing functions $f:\01^n\to\R_+$ in expectation give rise to an efficient quantum communication protocol to compute the matrix $M_f(x,y)=f(x \wedge y)$ in expectation, and hence to a low-rank psd factorization of $M_f$. 
We state it more generally:

\begin{theorem}
\label{thm:comm-query}
Let $Y$ be a finite set. For every $y\in Y$, let $f_y:\01^n \rightarrow \R_+$ satisfy $\QE(f_y)\leq T$. Define a $2^n\times|Y|$ matrix $M$ by $M(x,y)=f_y(x)$.
Then $\QCE(M)\leq 2T(\log(n)+1)$, and hence $\psdrk(M)\leq (2n)^{2T}$.
\end{theorem}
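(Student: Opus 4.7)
The plan is to have Bob simulate the assumed $T$-query quantum algorithm for $f_y$ locally, while using Alice to implement the oracle calls to $x$. Since Bob knows $y$, he can select the corresponding $\QE$-algorithm, described by unitaries $U_0,U_1,\ldots,U_T$ and a POVM $\{E^y_\theta\}_{\theta\in\Theta}$, all acting on his workspace (which includes both the query register and any ancilla). No part of this algorithm depends on $x$ except for the oracle calls themselves.

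Concretely, Bob prepares the initial state and applies $U_0$ entirely on his side. Then for each $t=1,\ldots,T$, he sends the query register --- consisting of $\lceil\log_2 n\rceil$ index qubits and one answer qubit, so at most $\log n + 1$ qubits --- to Alice; Alice applies the oracle $O_x$ locally (she holds $x$) and sends the register back; Bob then applies $U_t$ on his workspace. Each query therefore costs at most $2(\log n+1)$ qubits of communication, for a total of at most $2T(\log n+1)$. After the final $U_T$, the global state coincides exactly with the state $\ket{\psi_x^T}$ of the original query algorithm, so when Bob measures with $\{E^y_\theta\}$ and outputs the measurement outcome, the expected output equals $f_y(x)=M(x,y)$. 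This establishes $\QCE(M)\leq 2T(\log n+1)$.

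For the bound on psd rank, I would then invoke \thmref{th:psdrkqce} to conclude
\begin{equation*}
\log\psdrk(M)\;\leq\;\QCE(M)\;\leq\;2T(\log n+1),
\end{equation*}
which gives $\psdrk(M)\leq 2^{2T(\log n+1)}=(2n)^{2T}$, as claimed.

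There is no real obstacle here: this is the standard Buhrman--Cleve--Wigderson / Kremer--Yao style simulation of a query oracle by two rounds of quantum communication, tailored to the asymmetric situation where the algorithm itself depends on one party's input. The only mild care needed is noting that $y$-dependence of the unitaries and POVM sits entirely on Bob's side and hence costs nothing in communication, and that computing in expectation is preserved by this simulation because the final global state (and hence the final measurement distribution) is unchanged.
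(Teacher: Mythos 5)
Your proposal is correct and follows exactly the same route as the paper: Bob runs the $T$-query $\QE$-algorithm for $f_y$ and routes each oracle call through Alice via a $2(\log n+1)$-qubit round trip, then the psd-rank bound follows from Theorem~\ref{th:psdrkqce}. Nothing to add.
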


\begin{proof}
The proof is very similar to an analogous statement by Buhrman, Cleve, and Wigderson~\cite{BuhrmanCleveWigderson98} for regular quantum communication complexity. Bob (who has input~$y$) runs a $T$-query algorithm for $f_y$; whenever he needs to make a query to~$x$ he sends the $(\log(n)+1)$-qubit query register to Alice, who applies the query and sends it back.  Thus every query is implemented using $2(\log(n)+1)$ qubits of communication, and the expected value of Bob's output is $f_y(x)$. The bound on the psd rank follows from Theorem~\ref{th:psdrkqce}.
\end{proof}

Lee et al.~\cite{lrs:cspsdp} independently proved a similar upper bound on psd rank, stated in terms of the sos degree of the $f_y$ rather than quantum query complexity (which are equal by Theorem~\ref{thm:polyn}).

The $\log n$ factor in Theorem~\ref{thm:comm-query} is necessary.  Consider the function $f(x)=(|x|-1)^2$.  Then $\QE(f)=1$ by Theorem~\ref{thm:polyn}.  On the other hand $\psdrk(M_f)\geq n/\sqrt{2}$: it is easy to see that the rank of~$M$ is at most the square of its psd rank, and the rank of $M_f(x,y)=(|x \wedge y|-1)^2$ is $n^2/2+1$ using~\cite[Section~4.1]{BW01}.

\subsection{Application: approximating the slack matrix of the matching polytope}\label{ssec:matchingpolytope}

Here we give an application of the above connection between query algorithms and psd rank, by deriving an exponentially-close entrywise approximation of the slack matrix~$S$ of the perfect matching polytope, by a matrix with psd rank not much bigger than $2^{\sqrt{n}}$. This shows a big difference to the case of nonnegative rank: Braun and Pokutta~\cite{braun&pokutta:matchrelax} show that any $\tilde{S}$ that is $O(1/n)$-close to $S$ needs nonnegative rank $2^{\Omega(n)}$.

Edmonds gave a complete description of the facets of the perfect matching polytope for the complete $n$-vertex graph $K_n$~\cite{edmonds65}.  The key are the \emph{odd-set} inequalities:  for a perfect matching $M$, viewed as a vector $M\in \01^{{n\choose 2}}$ of weight $m=n/2$, and an odd-sized set $U \subseteq [n]$, the associated inequality says $|\delta(U) \cap M| \ge 1$, where $\delta(U)\in\01^{{n\choose 2}}$ denotes the cut induced by~$U$.  In addition, there are $O(n^2)$ degree and nonnegativity constraints.
Thus the corresponding slack matrix~$S$ has columns indexed by all perfect matchings $M$ in $K_n$ and rows indexed 
by odd-sized sets $U$ with entries $S_{UM}=|\delta(U)\cap M|-1$.  There are $O(n^2)$ additional rows for the degree and nonnegativity constraints.

In Theorem~\ref{thtayloredgrover} in the appendix, we show that the $m$-bit function $g(z)=|z|-1$ can be approximated (in expectation) up to exponentially small error with quantum query complexity $O(m^{1/2+\eps}\log m)$. Define $f_M(x)=g(x_M)$, where $x_M$ denotes the restriction of $n$-bit string~$x$ to the $m$ positions in the support of~$M$.  Applying Theorem~\ref{thm:comm-query} and adding $O(n^2)$ rows to account for the other constraints gives:

\begin{theorem}
For every $\eps>0$ there exists a matrix $\tilde{S}$ of psd rank 
$2^{O(n^{1/2+\eps}(\log n)^2)}$ such that
\begin{enumerate} 
\item $S_{UM}-2^{-(n/2)^{2\eps}}\leq\tilde{S}_{UM}\leq S_{UM}$ for the $UM$-entries where $|\delta(U)\cap M|>(n/2)^{2\eps}$;
\item $\tilde{S}_{xy}=S_{xy}$ for all other entries.
\end{enumerate} 
\end{theorem}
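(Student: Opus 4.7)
The plan is to construct $\tilde S$ by first approximating the odd-set rows using the quantum algorithm from Theorem~\ref{thtayloredgrover}, reading off the psd rank via Theorem~\ref{thm:comm-query}, and then attaching the $O(n^2)$ degree and nonnegativity rows by hand.

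Set $m=n/2$. Theorem~\ref{thtayloredgrover} supplies an $m$-bit quantum algorithm of cost $T=O(m^{1/2+\eps}\log m)=O(n^{1/2+\eps}\log n)$ that computes, in expectation, a function $\tilde g$ of $z\in\01^m$ which agrees with $|z|-1$ exactly when $|z|\le(n/2)^{2\eps}$ and otherwise lies in the interval $[\,|z|-1-2^{-(n/2)^{2\eps}},\,|z|-1\,]$. For each perfect matching $M$, define $\tilde f_M(x):=\tilde g(x_M)$, where $x_M\in\01^m$ is the restriction of $x\in\01^{{n\choose 2}}$ to the support of $M$; then $\QE(\tilde f_M)\le T$, and $\tilde f_M(\delta(U))$ already realises the desired entrywise approximation of $S_{UM}=|\delta(U)\cap M|-1$ in both bullets of the theorem statement.

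Now apply Theorem~\ref{thm:comm-query} with $Y$ the set of perfect matchings of $K_n$ and the family $\{\tilde f_M\}$ playing the role of $\{f_y\}$: the resulting matrix whose entries are $\tilde f_M(\delta(U))$ over odd-sized $U$ and matchings $M$ has psd rank at most $(2m)^{2T}=2^{O(n^{1/2+\eps}(\log n)^2)}$. Finally, stack on the $O(n^2)$ rows coming from the degree and nonnegativity facets; on a perfect matching the degree-constraint slacks vanish identically, and each nonnegativity-constraint slack is a single coordinate of $M$, so this row block has nonnegative (hence psd) rank $O(n^2)$. Since psd rank is sub-additive under row stacking, the final $\tilde S$ has psd rank at most $2^{O(n^{1/2+\eps}(\log n)^2)}+O(n^2)$, which fits the stated bound.

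The hard part is Theorem~\ref{thtayloredgrover} itself: producing an $O(m^{1/2+\eps}\log m)$-query quantum algorithm that evaluates $|z|-1$ in expectation \emph{exactly} on small Hamming weights while introducing only exponentially small one-sided error on the larger weights, so that both bullet items of the theorem statement hold simultaneously. This is where the approximate-counting/amplitude-estimation machinery must be carefully combined with a Taylor-type truncation so as to respect the threshold $(n/2)^{2\eps}$ within the stated query budget. Once that theorem is accepted as a black box, the reduction above is a direct plug-in to Theorem~\ref{thm:comm-query}.
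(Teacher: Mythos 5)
Your proposal is correct and follows essentially the same route as the paper: invoke the exponentially-accurate quantum algorithm from the appendix for $|z|-1$ on $m=n/2$ bits, lift it to $\tilde f_M(x)=\tilde g(x_M)$, apply Theorem~\ref{thm:comm-query} to bound the psd rank of the odd-set block, and stack the $O(n^2)$ degree/nonnegativity rows using subadditivity of psd rank under row stacking. (Two tiny nits: the $\QE$ upper bound is really the theorem \emph{following} Theorem~\ref{thtayloredgrover}, and the base in the psd-rank bound should be $2\binom{n}{2}$ rather than $2m$ since the query register addresses ${n\choose 2}$ positions — neither affects the $2^{O(n^{1/2+\eps}(\log n)^2)}$ conclusion.)
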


\paragraph{Acknowledgments.}
We thank Srinivasan Arunachalam, David Steurer, Mario Szegedy and Henry Yuen for useful discussions, Sebastian Pokutta for useful discussions and for pointing us to~\cite{braun&pokutta:matchrelax}, and James Lee for sending us a version of~\cite{lrs:cspsdp}.


\begin{thebibliography}{BHMT02}

\bibitem[AYW14]{ayw:approxbound}
S.~Arunachalam, H.~Yuen, and R.~{de} Wolf.
\newblock Unpublished manuscript, August 2014.

\bibitem[BBC{\etalchar{+}}01]{BBCMW01}
R.~Beals, H.~Buhrman, R.~Cleve, M.~Mosca, and {R. de} Wolf.
\newblock Quantum lower bounds by polynomials.
\newblock {\em Journal of the ACM}, 48(4):778--797, 2001.
\newblock Earlier version in FOCS'98.

\bibitem[BCW98]{BuhrmanCleveWigderson98}
H.~Buhrman, R.~Cleve, and A.~Wigderson.
\newblock Quantum vs.~classical communication and computation.
\newblock In {\em Proceedings of 30th ACM STOC}, pages 63--68, 1998.
\newblock quant-ph/9802040.

\bibitem[BCWZ99]{bcwz:qerror}
H.~Buhrman, R.~Cleve, R.~{de} Wolf, and Ch. Zalka.
\newblock Bounds for small-error and zero-error quantum algorithms.
\newblock In {\em Proceedings of 40th IEEE FOCS}, pages 358--368, 1999.
\newblock cs.CC/9904019.

\bibitem[BGP14]{bgp:sosonhypercube}
G.~Blekherman, J.~Gouveia, and J.~Pfeiffer.
\newblock Sums of squares on the hypercube.
\newblock arXiv/1402.4199, 18 Feb 2014.

\bibitem[BHMT02]{bhmt:countingj}
G.~Brassard, P.~H{\o}yer, M.~Mosca, and A.~Tapp.
\newblock Quantum amplitude amplification and estimation.
\newblock In {\em Quantum Computation and Quantum Information: A Millennium
  Volume}, volume 305 of {\em AMS Contemporary Mathematics Series}, pages
  53--74. 2002.
\newblock quant-ph/0005055.

\bibitem[BP15]{braun&pokutta:matchrelax}
G.~Braun and S.~Pokutta.
\newblock The matching polytope does not admit fully-polynomial size relaxation
  schemes.
\newblock In {\em To appear in Proceedings of SODA}, 2015.

\bibitem[BW01]{BW01}
H.~Buhrman and {R. de} Wolf.
\newblock Communication complexity lower bounds by polynomials.
\newblock In {\em Proceedings of 16th IEEE Complexity (CCC)}, pages 120--130,
  2001.

\bibitem[BW02]{buhrman&wolf:dectreesurvey}
H.~Buhrman and R.~{de} Wolf.
\newblock Complexity measures and decision tree complexity: A survey.
\newblock {\em Theoretical Computer Science}, 288(1):21--43, 2002.

\bibitem[CLRS13]{clrs:csplp}
S.~O. Chan, J.~R. Lee, P.~Raghavendra, and D.~Steurer.
\newblock Approximate constraint satisfaction requires large {LP} relaxations.
\newblock In {\em Proceedings of 54th IEEE FOCS}, pages 350--359, 2013.

\bibitem[Edm65]{edmonds65}
J.~Edmonds.
\newblock Maximum matching and a polyhedron with 0,1-vertices.
\newblock {\em Journal of research of the National Bureau of Standards--B},
  69B(1,2):125--130, 1965.

\bibitem[FMP{\etalchar{+}}12]{FioriniMassarPokuttaTiwaryDewolf2012}
S.~Fiorini, S.~Massar, S.~Pokutta, H.~R. Tiwary, and R.~{de} Wolf.
\newblock Linear vs.\ semidefinite extended formulations: Exponential
  separation and strong lower bounds.
\newblock In {\em Proceedings of 44th ACM STOC}, pages 95--106, 2012.

\bibitem[FSP13]{fsp:equivariant}
H.~Fawzi, J.~Saunderson, and P.~Parrilo.
\newblock Equivariant semidefinite lifts and sum-of-squares hierarchies.
\newblock arXiv:1312.6662, Dec 23, 2013.

\bibitem[GPT13]{GouveiaParriloThomas12}
J.~Gouveia, P.~Parrilo, and R.~Thomas.
\newblock Lifts of convex sets and cone factorizations.
\newblock {\em Mathematics of Operations Research}, 38(2):248--264, 2013.
\newblock arXiv:1111.3164.

\bibitem[Gri01]{grigoriev01}
D.~Grigoriev.
\newblock Complexity of {P}ositivstellensatz proofs for the knapsack.
\newblock {\em Computational Complexity}, 10:139--154, 2001.

\bibitem[Gro96]{grover:search}
L.~K. Grover.
\newblock A fast quantum mechanical algorithm for database search.
\newblock In {\em Proceedings of 28th ACM STOC}, pages 212--219, 1996.
\newblock quant-ph/9605043.

\bibitem[KN97]{KushilevitzNisan97}
E.~Kushilevitz and N.~Nisan.
\newblock {\em Communication complexity}.
\newblock Cambridge University Press, Cambridge, 1997.

\bibitem[Kre95]{kremer:thesis}
I.~Kremer.
\newblock Quantum communication.
\newblock Master's thesis, Hebrew University, Computer Science Department,
  1995.

\bibitem[Las01]{Lasserre01}
J.~B. Lasserre.
\newblock Global optimization with polynomials and the problem of moments.
\newblock {\em SIAM Journal on Optimization}, 11(3):796--817, 2001.

\bibitem[Lau03]{laurent03}
M.~Laurent.
\newblock Lower bound for the number of iterations in semidefinite hierarchies
  for the cut polytope.
\newblock {\em Mathematics of operations research}, 28(4):871--883, 2003.

\bibitem[LRS14]{lrs:cspsdp}
J.~R. Lee, P.~Raghavendra, and D.~Steurer.
\newblock Lower bounds on the size of semidefinite programming relaxations.
\newblock arXiv:1411.6317, Nov 24, 2014.

\bibitem[LRST14]{lrst:symmsdp}
J.~R. Lee, P.~Raghavendra, D.~Steurer, and N.~Tan.
\newblock On the power of symmetric {LP} and {SDP} relaxations.
\newblock In {\em Proceedings of 29th IEEE Complexity (CCC)}, pages 13--21,
  2014.

\bibitem[Mid04]{midrijanis:exact}
G.~Midrijanis.
\newblock Exact quantum query complexity for total {B}oolean functions.
\newblock quant-ph/0403168, 23 Mar 2004.

\bibitem[MP87]{minskypapert87}
M.~Minsky and S.~Papert.
\newblock {\em Perceptrons}.
\newblock {MIT} Press, 1987.

\bibitem[NC00]{nielsen&chuang:qc}
M.~A. Nielsen and I.~L. Chuang.
\newblock {\em Quantum Computation and Quantum Information}.
\newblock Cambridge University Press, 2000.

\bibitem[Pad89]{padberg89}
M.~Padberg.
\newblock The boolean quadric polytope.
\newblock {\em Mathematical programming}, 45:139--172, 1989.

\bibitem[Par00]{Parrilo00}
P.~Parrilo.
\newblock {\em Structured semidefinite programs and semialgebraic geometry
  methods in robustness and optimization}.
\newblock PhD thesis, California Institute of Technology, 2000.

\bibitem[Rot14]{rothvoss:matching}
T.~Rothvo{\ss}.
\newblock The matching polytope has exponential extension complexity.
\newblock In {\em Proceedings of 46th ACM STOC}, pages 263--272, 2014.

\bibitem[SA90]{SheraliAdams90}
H.~D. Sherali and W.~P. Adams.
\newblock A hierarchy of relaxations between the continuous and convex hull
  representations for zero-one programming.
\newblock {\em SIAM Journal on Discrete Mathematics}, 3:411--430, 1990.

\bibitem[Sho87]{Shor87}
N.~Z. Shor.
\newblock An approach to obtaining global extremums in polynomial mathematical
  programming problems.
\newblock {\em Cybernetics}, 23:695--700, 1987.

\bibitem[Swa86]{swart:tsp}
T.~Swart.
\newblock {P = NP}.
\newblock Technical report, University of Guelph, 1986.
\newblock Revision 1987.

\bibitem[Wol02]{wolf:qccsurvey}
R.~{de} Wolf.
\newblock Quantum communication and complexity.
\newblock {\em Theoretical Computer Science}, 287(1):337--353, 2002.

\bibitem[Wol03]{wolf:nqj}
R.~{de} Wolf.
\newblock Nondeterministic quantum query and quantum communication
  complexities.
\newblock {\em SIAM Journal on Computing}, 32(3):681--699, 2003.

\bibitem[Yan91]{Yannakakis91}
M.~Yannakakis.
\newblock Expressing combinatorial optimization problems by linear programs.
\newblock {\em Journal of Computer and System Sciences}, 43(3):441--466, 1991.
\newblock Earlier version in STOC'88.

\bibitem[Yao93]{yao:qcircuit}
A.~C-C. Yao.
\newblock Quantum circuit complexity.
\newblock In {\em Proceedings of 34th IEEE FOCS}, pages 352--360, 1993.

\bibitem[YFGT12]{ffgt:extandcc}
S.~Fiorini Y.~Faenza, R.~Grappe, and H.~R. Tiwary.
\newblock Extended formulations, nonnegative factorizations, and randomized
  communication protocols.
\newblock In {\em Proceedings of ISCO'12}, pages 129--140, 2012.

\end{thebibliography}
\bibliographystyle{alpha}

\newcommand{\etalchar}[1]{$^{#1}$}

\appendix
\section{A tailored quantum search algorithm}

The \emph{search problem} is the following: we have an $m$-bit input $z$ that we can access by means of queries, and our goal is to find an index $i\in[m]$ such that $z_i=1$. Such an $i$ will be called a ``solution''.  
The number of solutions is the Hamming weight of the input, denoted~$|z|$. 
Grover's algorithm~\cite{grover:search,bhmt:countingj} solves this problem using $O(\sqrt{m})$ queries.  We will use the following two variants:
\begin{itemize}
\item There is a quantum algorithm using $O(\sqrt{m/t})$ queries that finds a solution with probability at least 1/2 if $|z|\in[t,2t]$. 
\item There is a quantum algorithm using $O(\sqrt{m/t})$ queries that finds a solution with certainty if $|z|=t$. 
\end{itemize}
We combine these variants of Grover to prove the following theorem, similar to~\cite[Theorem~3]{bcwz:qerror}:  

\begin{theorem}\label{thtayloredgrover}
For every integer $\ell>0$ there exists a quantum algorithm that makes $O(\sqrt{m\ell}\log m)$ queries to input $z\in\01^m$ and that has the following properties:
\begin{enumerate} 
\item If $z=0^m$ then the algorithm outputs ``no solution'' with certainty.
\item If $|z|\in\{1,\ldots,\ell\}$ then the algorithm outputs a solution with certainty.
\item If $|z|>\ell$ then the algorithm outputs a solution with probability $\geq 1-2^{-\sqrt{\ell|z|}}$.
\end{enumerate}
\end{theorem}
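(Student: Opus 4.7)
The plan is to combine the two Grover variants stated just above the theorem into a two-phase algorithm: an exact-search phase A that handles the regime $|z|\in\{1,\ldots,\ell\}$ with certainty, and an amplified bounded-error phase B that handles the regime $|z|>\ell$ with the desired exponentially small failure probability. Crucially, every ``candidate'' output by any sub-routine will be verified by one extra query, so on input $z=0^m$ nothing ever passes verification and the algorithm can safely output ``no solution.''

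In phase A, for each $t=1,2,\ldots,\ell$ I would run the exact-Grover variant tuned to weight~$t$, costing $O(\sqrt{m/t})$ queries, then query the returned index once to verify it is a solution, and halt if so. The total query cost of phase~A is $\sum_{t=1}^{\ell} O(\sqrt{m/t}) = O(\sqrt{m}\cdot\sqrt{\ell})=O(\sqrt{m\ell})$. Correctness for $1\le|z|\le\ell$ is immediate: when $t=|z|$, the exact variant returns a solution deterministically, the verification succeeds, and the algorithm outputs it.

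In phase B, I would loop over $k=0,1,\ldots,\lceil\log_2(m/\ell)\rceil$, set $t_k=2^k\ell$, and run the bounded-error Grover variant tuned to weight $t_k$ independently $N_k:=\lceil\sqrt{2\ell t_k}\rceil$ times, verifying every candidate with one additional query and halting on the first success. Each individual run uses $O(\sqrt{m/t_k})$ queries, so the cost of iteration $k$ is $N_k\cdot O(\sqrt{m/t_k})=O(\sqrt{\ell t_k}\cdot\sqrt{m/t_k})=O(\sqrt{m\ell})$; summed over the $O(\log m)$ values of $k$, phase~B costs $O(\sqrt{m\ell}\log m)$ queries. For the analysis when $|z|>\ell$, pick the unique $k^\star$ with $|z|\in[2^{k^\star}\ell,2^{k^\star+1}\ell]$; then in iteration $k^\star$ every independent run succeeds with probability at least $1/2$, so the probability that all $N_{k^\star}$ runs fail is at most $2^{-N_{k^\star}}\le 2^{-\sqrt{2\ell t_{k^\star}}}\le 2^{-\sqrt{\ell|z|}}$, as required.

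The main obstacle I expect is simply the accounting: one must choose the number of repetitions $N_k$ large enough to push the failure probability below $2^{-\sqrt{\ell|z|}}$ uniformly over all $|z|$ in the $k$-th band, yet small enough that the per-band query cost is still $O(\sqrt{m\ell})$ rather than worse, and the choice $N_k=\lceil\sqrt{2\ell t_k}\rceil$ is exactly what makes both constraints tight. A minor further point is that the exact-Grover sub-routine for a given $t$ can be safely invoked even when $|z|\neq t$, since even if its output is arbitrary the single verification query ensures we only ever report a true solution; together with the analogous verification in phase~B this takes care of property~1 with zero extra work.
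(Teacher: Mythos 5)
Your proposal is correct and takes essentially the same approach as the paper: exact Grover for each $t=1,\ldots,\ell$, then $\Theta(\sqrt{\ell t})$ repetitions of bounded-error Grover over geometrically spaced weight bands $[t,2t]$ from $t\approx\ell$ up to $m$, with each candidate verified by a single query. The only differences are cosmetic — you parameterize the bands as $t_k=2^k\ell$ whereas the paper uses $2^i$ for $i\geq\lfloor\log\ell\rfloor$, and you verify and halt on the fly whereas the paper batches all verification queries in a final step; neither affects correctness or the $O(\sqrt{m\ell}\log m)$ worst-case query count.
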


\begin{proof}
The algorithm is as follows:
\begin{enumerate}
\item Run exact Grover $\ell$ times, once for each of the possibilities $t=1,2,\ldots,\ell$.
\item For $i=\floor{\log\ell},\ldots,\floor{\log m}$:
Run $\ceil{\sqrt{\ell 2^{i+1}}}$ times the version of Grover that assumes $|z|\in [2^i,2^{i+1}]$.
\item Check each of the indices produced by these runs (using one query per index).
\item Output a solution if one was found, and output ``no solution'' otherwise.
\end{enumerate}
Clearly, the algorithm behaves as promised if $|z|\leq\ell$. Now suppose $|z|>\ell$ and let $i$ be the unique integer such that $|z|\in[2^i,2^{i+1})$.  For that~$i$, each of the $\ceil{\ell\sqrt{2^{i+1}}}$ runs of Grover has probability $\geq 1/2$ of producing a solution, hence the probability of \emph{not} finding a solution is $\leq 2^{-\ceil{\sqrt{\ell 2^{i+1}}}}\leq 2^{-\sqrt{\ell|z|}}$ in this case.

It remains to bound the query complexity of the algorithm.
The number of queries used in step~1 is
$$
\sum_{t=1}^\ell O(\sqrt{m/t})=O(\sqrt{m\ell}).
$$
The number of queries used in step~2 is
$$
\sum_{i=\floor{\log\ell}}^{\floor{\log m}} \ceil{\sqrt{\ell 2^{i+1}}}O\left(\sqrt{m/2^i}\right)=O(\sqrt{m\ell}\log m).
$$
The total number of runs of (versions of) Grover's algorithm is 
$$
\ell+\sum_{i=\floor{\log\ell}}^{\floor{\log m}}\ceil{\sqrt{\ell 2^i}}=O(\sqrt{m\ell}).
$$
Since each such run produces one index that needs to be checked, the number of queries made in step~3 is $O(\sqrt{m\ell})$. Thus the overall query complexity is $O(\sqrt{m\ell}\log m)$ as promised.
\end{proof}

We can derive from this a function $f:\01^m\to\R_+$ that approximates $|z|-1$ extremely well, and that has quantum query complexity in expectation not much bigger than $\sqrt{m}$:

\begin{theorem}
For every $\eps>0$ there exists a function $f:\01^m\to\R_+$ satisfying $\QE(f)=O(m^{1/2+\eps}\log m)$ and 
\begin{enumerate} 
\item $f(0^m)=0$.
\item If $|z|\in\{1,\ldots,\ell\}$ then $f(z)=|z|-1$.
\item If $|z|>\ell$ then $|z|-1-2^{-m^{2\eps}}\leq f(z)\leq|z|-1$.
\end{enumerate}
\end{theorem}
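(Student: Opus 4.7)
The plan is to lift the quantum search algorithm $A$ of Theorem~\ref{thtayloredgrover} to a quantum-query-in-expectation algorithm for a function $f$ that agrees exactly with $|z|-1$ on light inputs and approximates it to within $2^{-m^{2\eps}}$ on heavy inputs. The idea is to treat the index $i$ returned by $A$ as a ``free'' solution and then estimate the number of remaining $1$'s with a single uniformly random classical query.

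First I would set $\ell=\ceil{m^{2\eps}+\log m}$, so that $A$ uses $O(\sqrt{m\ell}\log m)=O(m^{1/2+\eps}\log m)$ queries. The algorithm $B$ that I propose for computing $f$ in expectation is: run $A$ on $z$; if it returns ``no solution'', output $0$; otherwise, let $i\in[m]$ be its output (necessarily satisfying $z_i=1$, since $A$ verifies its candidate in step~3), draw $j$ uniformly from $[m]\setminus\{i\}$, query $z_j$, and output $(m-1)z_j$. The output is nonnegative, and $B$ makes at most $O(m^{1/2+\eps}\log m)+1$ queries, yielding the claimed bound on $\QE(f)$.

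For the correctness analysis, the key identity is that conditional on $A$ returning any specific index $i$ (with $z_i=1$), the expectation of $(m-1)z_j$ over $j\in[m]\setminus\{i\}$ equals $(m-1)\cdot(|z|-1)/(m-1)=|z|-1$, independently of which $i$ was chosen. Hence $f(z)=p(z)(|z|-1)$, where $p(z)$ is the probability that $A$ returns an index rather than ``no solution''. The three cases of the theorem are then read off directly from Theorem~\ref{thtayloredgrover}: $p(0^m)=0$ gives item~1, $p(z)=1$ for $1\le|z|\le\ell$ gives item~2, and $p(z)\ge 1-2^{-\sqrt{\ell|z|}}\ge 1-2^{-\ell}$ for $|z|>\ell$ combines with $|z|-1\le m$ to bound the approximation error in item~3 by $m\cdot 2^{-\ell}\le 2^{-m^{2\eps}}$.

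There is no real obstacle: Theorem~\ref{thtayloredgrover} does all the work, and what remains is a standard unbiased-estimator trick for extracting an expectation equal to $|z|-1$ from a successful search. The only delicate point is calibrating $\ell$: it must be large enough that $m\cdot 2^{-\ell}\le 2^{-m^{2\eps}}$ (hence the $\log m$ slack on top of $m^{2\eps}$), yet small enough that $\sqrt{m\ell}\log m=O(m^{1/2+\eps}\log m)$, which the choice $\ell=\ceil{m^{2\eps}+\log m}$ delivers.
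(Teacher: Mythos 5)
Your proof is correct and follows essentially the same approach as the paper: run the tailored Grover search of Theorem~\ref{thtayloredgrover} and, conditional on it finding a solution $i$, sample a uniformly random $j\neq i$ and output $(m-1)z_j$, which is an unbiased estimator of $|z|-1$ independent of $i$. Your choice $\ell=\ceil{m^{2\eps}+\log m}$ is in fact slightly more careful than the paper's $\ell=m^{2\eps}$, since the error in item~3 is $(|z|-1)\le m$ times the failure probability and the extra additive $\log m$ in $\ell$ cleanly absorbs that factor without affecting the $O(m^{1/2+\eps}\log m)$ query bound.
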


\begin{proof}
Set $\ell=m^{2\eps}$.  Run the algorithm of Theorem~\ref{thtayloredgrover}, which uses $O(m^{1/2+\eps}\log m)$ queries. If $|z|\geq 1$, it finds a solution (i.e., an $i\in[m]$ such that $z_i=1$) with very high probability. If it did not find a solution the algorithms outputs~0.  If, on the other hand, $i$ is a solution then the algorithm queries a uniformly random index $j\neq i$ and outputs $z_j\cdot(m-1)$. Let $f(z)$ be the expected output of this algorithm on input $z$.

If $z=0^m$ the algorithm always outputs 0, establishing the first property. If $|z|\in\{1,\ldots,\ell\}$ then $i$ is a solution with certainty, and the expected value of the output is $\Pr[z_j=1]\cdot(m-1)=\frac{|z|-1}{m-1}\cdot(m-1)=|z|-1$, establishing the second property. If $|z|>\ell$ then the algorithm finds a solution except with probability $2^{-m^{2\eps}}$, which implies the third property.
\end{proof}

\end{document}